\documentclass[acmsmall, screen, nonacm]{acmart}
\usepackage{ltl}
\usepackage{dashbox}
\usepackage{mathpartir}
\usepackage{csquotes}
\usepackage{hyperref}
\usepackage{amsthm}
\usepackage{thmtools}
\usepackage{nameref, cleveref}
\usetikzlibrary{automata, positioning}

\theoremstyle{remark}
\newtheorem{example}{Example}

\Crefname{theorem}{Theorem}{Theorems}
\Crefname{lemma}{Lemma}{Lemmas}
\Crefname{example}{Example}{Examples}

\AtBeginEnvironment{example}{%
  \pushQED{\qed}%
}
\AtEndEnvironment{example}{\popQED\endexample}

\ccsdesc[500]{Theory of computation~Automata over infinite objects}
\ccsdesc[300]{Theory of computation~Program verification}

\keywords{Fairness, Interactive Verification, Coinduction}

\begin{abstract}
  It is well known that liveness properties cannot be proven using standard simulation arguments. This issue has been mitigated by extending standard notions of simulation for transition systems to \emph{fairness-preserving simulations} for systems equipped with an additional fairness condition modeling liveness assumptions and/or liveness requirements.
  In the context of automated verification of finite-state systems, proofs by simulation are an appealing method as there exist efficient algorithms to find a simulation between two systems.
  However, applications of fair simulation to interactive verification have been much less studied.
  Perhaps one reason is that the definitions of fair simulation relations typically involve non-trivial nestings of inductive and coinductive relations, making them particularly difficult to use and to reason about.
  In this paper, we argue that in many cases, stronger notions of fair simulation involving more controlled alternations of fixed points are sufficient.
  Starting from known fair simulation techniques, we progressively build up a family of \emph{almost fair} simulation relations for transition systems equipped with a B\"uchi fairness condition.
  The simulation relations we present can all be equipped with intuitive reasoning rules, leading to elegant deductive systems to prove fair trace inclusion.
  We mechanized our simulation relations and their associated deductive systems in the Rocq proof assistant, proved their soundness, and we demonstrate their use through a selection of examples.
\end{abstract}

\newcommand{\iostep}[3]{#1 \xrightarrow{#2} #3}

\newcommand{\lang}{\mathcal{L}}

\newcommand*{\cbox}[1]{\fbox{\ensuremath{#1}}}

\newcommand*{\obox}[1]{\dbox{\ensuremath{#1}}}

\newcommand{\triple}[4]{\obox{#1} \vdash #2 #3 #4}
\newcommand{\gtriple}[4]{\cbox{#1} \vdash #2 #3 #4}

\newcommand{\tripleDelay}[4]{
  \obox{#1} \vdash_\texttt{d} #2 #3 #4
}
\newcommand{\gtripleDelay}[4]{
  \cbox{#1} \vdash_\texttt{d} #2 #3 #4
}

\newcommand{\tripleRdelay}[4]{
  \obox{#1} \vdash_\texttt{rd} #2 #3 #4
}
\newcommand{\gtripleRdelay}[4]{
  \cbox{#1} \vdash_\texttt{rd} #2 #3 #4
}

\newcommand{\triplerb}[4]{
  \obox{#1} \vdash_\texttt{rb} #2 #3 #4
}
\newcommand{\gtriplerb}[4]{
  \cbox{#1} \vdash_\texttt{rb} #2 #3 #4
}

\newcommand{\relL}{\preccurlyeq_\texttt{L}}
\newcommand{\relR}{\preccurlyeq_\texttt{R}}
\newcommand{\relW}{\preccurlyeq_\texttt{W}}

\newcommand{\fsimdirect}{\texttt{fsim}_\texttt{direct}}
\newcommand{\fsimdirectF}{\texttt{fsimF}_\texttt{direct}}

\newcommand{\fsimdelay}{\texttt{fsim}_\texttt{delay}}
\newcommand{\fsimdelayR}{\texttt{fsim}^\texttt{R}_\texttt{delay}}
\newcommand{\fsimdelayL}{\texttt{fsim}^\texttt{L}_\texttt{delay}}

\newcommand{\fsimdelayFL}{\texttt{fsimF}^\texttt{L}_\texttt{delay}}

\newcommand{\ddelaysim}{\texttt{fsim}_\texttt{2delay}}

\newcommand{\delay}{\texttt{wait}} 

\newcommand{\rdelaysim}{\texttt{fsim}_\texttt{rdelay}}
\newcommand{\rdelaysimW}{\texttt{fsim}^\texttt{W}_\texttt{rdelay}}
\newcommand{\rdelaysimL}{\texttt{fsim}^\texttt{L}_\texttt{rdelay}}
\newcommand{\rdelaysimR}{\texttt{fsim}^\texttt{R}_\texttt{rdelay}}

\begin{document}

\title{Almost Fair Simulations}

\author{Arthur Correnson}
\orcid{0000-0003-2307-2296}
\affiliation{%
  \institution{CISPA Helmholtz Center for Information Security}
  \city{Saarbrücken}
  \country{Germany}
}
\email{arthur.correnson@cispa.de}

\author{Iona Kuhn}
\orcid{0009-0008-4109-6092}
\affiliation{%
  \institution{Saarland University}
  \city{Saarbrücken}
  \country{Germany}
}
\email{ioku00001@stud.uni-saarland.de}

\author{Bernd Finkbeiner}
\orcid{0000-0002-4280-8441}
\affiliation{%
  \institution{CISPA Helmholtz Center for Information Security}
  \city{Saarbrücken}
  \country{Germany}
}
\email{finkbeiner@cispa.de}

  \maketitle

  \section{Introduction}
  Simulation proofs provide a systematic technique to reduce proofs of trace inclusion between programs into simple \textit{local} reasoning about states and transitions.
  In the context of program verification, the problem of checking whether
  a program satisfies a given safety specification (e.g. specifications stating that \textit{"nothing bad ever happens"}) can always be reformulated as a trace
  inclusion between a source transition system modeling the program and a target transition system that nondeterministically produces safe behaviors.
  In turn, simulation can immediately be applied as a simple technique to prove safety properties.
  




  Unfortunately, this technique does not immediately work for liveness specifications (e.g. specifications stating that \textit{"something good should eventually happen"}) such as termination or response properties. Indeed, modeling liveness specifications (and also programs with liveness assumptions) as transition systems typically requires augmenting the transition systems with an additional \emph{fairness condition} describing which executions are legitimate, and which are not.
  In this context, proving trace inclusion means proving that any 
  trace engendered by a fair execution of the source transition system can be reproduced via a fair execution of the target transition system.
  Standard simulation techniques would only ensure that any execution of the source can be mimicked by an arbitrary execution of the target, disregarding the fairness conditions.
  Instead, we need a notion of simulation that filters out unfair executions of the source, as these do not need to be simulated, and further enforces that the remaining fair executions of the source can be simulated by \emph{fair} executions of the target.

  It is important to note that fairness conditions are typically \textit{global} conditions inspecting an entire execution to determine whether it is fair or not.
  This contrasts with one of the main benefits of simulation techniques, which is \emph{precisely} to reduce global reasoning about executions to local reasoning about states and transitions. In turn, designing fairness-preserving simulation relations is non-trivial and it requires to carefully approximate the global fairness conditions with more local checks on states and transitions.

  To this extent, several extended notion of simulation such as direct simulation, delay simulation, and fair simulation have been proposed in the literature \cite{equivalence-fair-kripke, fair-simulation,delay-simulation}.
  However, the focus has been mostly on the design of efficient algorithms to automatically construct simulations between two finite-state transition systems, or on algorithms to minimize large systems by computing their quotient modulo a fairness-preserving simulation relation \cite{smaller-quotients}.
  The use of fairness-preserving simulation relations for interactive verification, and in particular interactive verification of liveness properties, has been comparatively much less explored.

  In this paper, we propose to re-explore already established notions of fairness-preserving simulation, but from the point of view of interactive deductive verification inside a proof assistant. More precisely, we focus on fairness-preserving simulation of B\"uchi automata, an expressive computational model 
  that is commonly used in the literature on automated verification of linear time properties. We present the following contributions: \begin{enumerate}
    \setlength\itemsep{1em}
    \item Our first contribution is to formalize direct simulation and delay simulation of B\"uchi automata in the Rocq proof assistant. While both notions are well-established in the literature, their rigorous formalization in a proof assistant
    turns out to be fairly technical.
    Further, by leveraging the framework of \emph{parameterized coinduction}, we demonstrate that both direct and delay simulation can be presented in the form of a deductive system to prove language inclusion of B\"uchi automata interactively.
    \item Through a selection of examples, we precisely identify limitations of delay simulation and argue that for the purpose of interactive proofs of liveness properties, weaker notions of fairness-preserving simulations are required.
    We nonetheless observe that in the specific case where the left-hand automaton is a safety automaton (i.e., when we only have liveness requirements and no liveness assumptions),
    a simpler notion of \emph{right-biased} delay simulation can be sufficient.
    \item To mitigate the limitations of direct and delay simulation in the case where
    the left-hand automaton is \emph{not} a safety automaton (i.e., when we have both liveness assumptions and liveness requirements),
    we present two novel notions of fairness-preserving simulation: \emph{double delay simulation}, and \emph{repeated delay simulation}.
    Both are significantly weaker than delay simulation, but they remain sound for language inclusion of B\"uchi automata. Further, they preserve the simplicity and ease-of-use of delay simulation and can also be equipped with an intuitive set of reasoning rules. We mechanized these new notions of simulation and their proof of soundness in Rocq.
  \end{enumerate}

  \section{Preliminaries}
  \label{sec:preliminaries}

  Before proceeding with a description of the contributions, we start by introducing necessary background on labeled transition systems, B\"uchi automata, and parameterized coinduction, which all play a central role throughout the paper.

  \subsection{Labeled Transition Systems}

  A labeled transition system (LTS) is a triple $(\mathcal{S}, \mathcal{E}, \mathcal{I}, \to)$ where $\mathcal{S}$ is a set of states, $\mathcal{E}$ is a set of events, $\mathcal{I} \subseteq \mathcal{S}$ is a set of initial states, and $\mathcal{\to} \subseteq \mathcal{S} \times \mathcal{E} \times \mathcal{S}$ is a labeled transition relation.
  $s_1 \xrightarrow{e} s_2$ indicates that it is possible to transition from $s_1$ to $s_2$ while emitting/reading the event $e$. Given a LTS $\mathit{TS} = (\mathcal{S}, \mathcal{E}, \mathcal{I}, \to)$, the set of \emph{traces} that can be produced starting from a given state $s \in \mathcal{S}$ is the set of infinite sequences of events $\tau \in \mathcal{E}^\omega$ characterized by the following coinductive relation $\mathit{Traces}_\mathit{TS} \subseteq \mathcal{S} \times \mathcal{E}^\omega$.

  \begin{definition}[Traces] $\mathit{Traces}_\mathit{TS} \triangleq \nu T . \{ \ (s, e \cdot \tau) \mid \exists s \xrightarrow{e} s'. \ (s', \tau) \in T \ \}$
  \end{definition}

  For a specific state $s \in \mathcal{S}$ we note $\mathit{Traces}_\mathit{TS}(s) \triangleq \{ \tau \mid (s, \tau) \in \mathit{Traces}_\mathit{TS} \}$.
  For a set of states $S \subseteq \mathcal{S}$ we note $\mathit{Traces}_\mathit{TS}(S) \triangleq \bigcup_{s \in S} \mathit{Traces}_\mathit{TS}(s)$.
  We will often consider the set $\mathit{Traces}_\mathit{TS}(\mathcal{I})$ of all \emph{initial} traces of $\mathit{TS}$.
  Alternatively, we note $\mathit{Traces}(\mathit{TS}) \triangleq \mathit{Traces}_\mathit{TS}(\mathcal{I})$.
  When clear from context, we omit the subscript $\mathit{TS}$.

  \subsection{B\"uchi Automata}

  A B\"uchi automaton $(\mathcal{S}, \mathcal{E}, \mathcal{I}, \mathcal{F}, \to)$
  is a LTS extended with a set of accepting states $\mathcal{F} \subseteq \mathcal{S}$ that should be visited infinitely often over the course of an execution for its associated trace to be considered valid. Traces that can be produced via infinitely many visits to accepting states are said to be in the language of the automaton.
  Formally, the language of a state $s$ of an automaton A is characterized by the following
  coinductive-inductive relation $\lang_\mathit{A} \subseteq \mathcal{S} \times \mathcal{E}^\omega$.

  \begin{definition}[Language]
    \begin{align*}
      \lang_\mathit{A} \triangleq \nu L . \mu X . \ &\{ \ (s, e \cdot \tau) \mid \exists s \xrightarrow{e} s'. \ (s', \tau) \in X \ \} \ \cup\\
      &\{ \ (s, e \cdot \tau) \mid s \in \mathcal{F} \wedge \exists s \xrightarrow{e} s'. \ (s', \tau) \in L \ \}
    \end{align*}
  \end{definition}

  As for traces, we use the notations $\lang_A(s) \triangleq \{ \tau \mid (s, \tau) \in \lang_A \}$, $\lang_A(S) \triangleq \bigcup_{s \in S} \lang_A(s)$, and $\lang(A) \triangleq \lang_A(\mathcal{I})$.

  Contrary to LTSs, which can only model safety requirements, B\"uchi automata can also model specifications with a liveness component.
  For example, the following automaton over events $\mathcal{E} = \{ a, b \}$ expresses the requirement that an event "a" must eventually be emitted.

  \begin{center}
    \begin{tikzpicture}[shorten >=1pt, node distance=1cm, auto, initial text=]
      \node[initial, state] (A) {$q_0$};
      \node[accepting, state, right=of A] (B) {$q_1$};
      \path[->]
        (A) edge node{$a$} (B)
        (A) edge[loop above] node{$b$} (A)
        (B) edge[loop above] node{$a, b$} (B);
    \end{tikzpicture}
  \end{center}

  In this automaton, the only accepting state is marked with a double circle.
  Viewed as a LTS, this automaton accepts any infinite sequence of events $a$ and $b$. However, the addition of the accepting set $\mathcal{F} = \{ q_1 \}$ effectively filters out traces containing only $b$'s.

  We note that B\"uchi automata are usually assumed to have finite state spaces (i.e., $|\mathcal{S}| < \infty$).
  This limits the specifications that can be encoded to so called \emph{omega-regular properties} \cite{Baier2008PrinciplesOM}.
  The advantage of considering only finite-state automata is that it gives access to efficient decision procedures to solve problems such as language inclusion.
  This observation is the basis of 
  automata-based model-checking Linear Temporal Logics.
  In this paper, we are not concerned with automated verification. Instead, we aim to develop interactive deductive proof techniques. In this setting, assuming the finiteness of the state-space is an artificial limitation.

  \subsection{(Parameterized) Coinduction}
  \label{sec:paco}

  Let $A$ be a set, and $(\mathcal{P}(A), \subseteq, \cap, \cup, A, \emptyset)$ be the complete lattice of subsets of $A$ (ordered by set inclusion). A result due to Tarski \cite{tarski} guarantees that any monotone functor $F : \mathcal{P}(A) \xrightarrow{\mathit{mon}} \mathcal{P}(A)$ has a greatest fixed point noted $\nu F$. Further, $\nu F$ is exactly the union of all postfixed points of $F$: \[
    \nu F \triangleq \bigcup \ \{ \ R \mid R \subseteq F(R) \ \}
  \]

  An immediate consequence of this result is that any coinductive predicate $\nu F \subseteq A$ has a systematic proof technique associated with it. Indeed, let $X \subseteq A$ and suppose we want to prove that $X \subseteq \nu F$.
  Since $\nu F$ is larger than any postfixed point of $F$, it suffices to exhibit a postfixed point of $R$ of $F$ containing $X$.
  In this context, $R$ is usually referred to as a \emph{coinduction hypothesis}.
  
  \begin{lemma}[Coinduction Principle]
    \label{coind}
    $X \subseteq \nu F \iff \exists R, X \subseteq R \wedge R \subseteq F(R)$
  \end{lemma}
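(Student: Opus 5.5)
The plan is to prove the two directions of the equivalence separately, relying only on the characterization of $\nu F$ given just above: $\nu F$ is the union of all postfixed points of $F$, which holds by Tarski's theorem for monotone $F$.

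For the direction from left to right, assume $X \subseteq \nu F$. I will take $R \triangleq \nu F$ as the witness. The inclusion $X \subseteq R$ is then exactly the hypothesis. It remains to show that $\nu F$ is itself a postfixed point, i.e. $\nu F \subseteq F(\nu F)$. Since $\nu F$ is a fixed point, we even have $\nu F = F(\nu F)$. If one prefers to use only the union characterization, argue as follows. Take any $a \in \nu F$. Then $a \in R'$ for some $R'$ with $R' \subseteq F(R')$. Also $R' \subseteq \nu F$, because $R'$ is one of the sets in the union. By monotonicity, $F(R') \subseteq F(\nu F)$. Hence $a \in F(\nu F)$.

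For the direction from right to left, assume $R$ satisfies $X \subseteq R$ and $R \subseteq F(R)$. Then $R$ belongs to the family $\{ R \mid R \subseteq F(R) \}$, so $R \subseteq \bigcup \{ R \mid R \subseteq F(R)\} = \nu F$. Transitivity of $\subseteq$ then gives $X \subseteq \nu F$.

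The only step that needs any thought is showing that $\nu F$ is postfixed in the forward direction. This is where monotonicity of $F$ is used. With the monotonicity argument above, the step is immediate. Everything else is unfolding definitions.
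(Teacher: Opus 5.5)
Your proof is correct and follows the paper's reasoning: the paper supports the lemma only with the remark that $\nu F = \bigcup \{ R \mid R \subseteq F(R) \}$ contains every postfixed point, which is your right-to-left direction, and it leaves the converse implicit. Your left-to-right direction, taking $R = \nu F$ and showing via monotonicity that it is postfixed, correctly supplies that missing half.
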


  A common use case for proofs by coinduction are proofs of trace inclusion by simulation.
  Given two LTSs $\mathit{TS}_1 = (\mathcal{S}_1, \mathcal{E}, \mathcal{I}_1, \to_1)$ and $\mathit{TS}_2 = (\mathcal{S}_2, \mathcal{I}_2, \mathcal{E}, \to_2)$ with the same set of events, to prove that they are \emph{trace included} (i.e., $\mathit{Traces}(\mathcal{I}_1) \subseteq \mathit{Traces}(\mathcal{I}_2)$) it is well known that it suffices to show that $\mathcal{I}_1 \times \mathcal{I}_2 \subseteq \texttt{sim}$ where $\texttt{sim}$ is the binary relation defined coinductively as follows. 

  \begin{definition}[Standard Simulation]
    \begin{align*}
      \texttt{simF}(X) &\triangleq \{ \ (s_1, s_2) \mid \forall e. \forall \iostep{s_1}{e}{s'_1}. \exists \iostep{s_2}{e}{s'_2} \wedge (s'_1, s'_2) \in X \ \}\\
      \texttt{sim} &\triangleq \nu \texttt{simF}
    \end{align*}
  \end{definition}

  \begin{example}
    \label{ex:simple-sim}

    As an example, consider the two following LTSs labeled with events $\mathcal{E} = \{ a, b, \epsilon \}$. The left-hand one could represent a program repeatedly emitting an event "a" and the right-hand one a safety specification requiring that only event "a" or event "b" is ever emitted.

  \begin{center}
    \begin{tikzpicture}[shorten >=1pt, node distance=1cm, auto, initial text=]
      \node[initial, state] (A) {$q_0$};
      \node[state, right=of A] (B) {$q_1$};
      \path[->]
        (A) edge[bend left] node {$\epsilon$} (B)
        (B) edge[bend left] node {$a$} (A);
    \end{tikzpicture}
    \qquad
    \begin{tikzpicture}[shorten >=1pt, node distance=1cm, auto, initial text=]
      \node[initial, state] (A) {$r_0$};
      \node[state, right=of A] (B) {$r_1$};
      \node[state, left=of A] (C) {$r_2$};
      \path[->]
        (A) edge[bend left] node {$\epsilon$} (B)
        (B) edge[bend left] node {$a$} (A)
        (A) edge[bend left] node {$\epsilon$} (C)
        (C) edge[bend left] node[above] {$b$} (A);
    \end{tikzpicture}
  \end{center}

  It is not difficult to check that $R = \{ (q_0, r_0), (q_1, r_1) \}$ contains $\mathcal{I}_1 \times \mathcal{I}_2 = \{ (q_0, r_0) \}$ and is a postfixed point of $\texttt{simF}$.
  Therefore, by \Cref{coind}, $\mathcal{I}_1 \times \mathcal{I}_2 \subseteq \texttt{sim}$, and the left-hand system satisfies the right-hand specification.
  \end{example}

  For this simple example, guessing a postfixed point up-front was not difficult. For larger systems, this can rapidly become more challenging. Starting with at least $\mathcal{I}_1 \times \mathcal{I}_2$ is necessary, but in practice it is not sufficient (in the above example, $\mathcal{I}_1 \times \mathcal{I}_2$ is not a postfixed point!), requiring to restart the proof with a slightly larger guess until we eventually find a large enough one.
  In the context of automated verification, this trial and error process to prove simulation is not necessarily a limitation, and intermediate results can be automatically cached and re-exploited. However, in the context of interactive deductive verification, in particular inside a proof assistant, it is cumbersome.
  In \cite{paco}, it has been observed that even in an interactive setting, there is actually no strict need to guess the coinduction hypothesis up-front.
  Better, there is actually no need to ever provide a postfixed point
  at all and it generally suffices to collect fragments of a postfixed point instead~\cite{paco, gpaco}.
  The key idea supporting this technique is to replace the usual fixed point operator $\nu$ with a parameterized version of it, $G_F(H)$, where $H$ is a current \emph{guess} for (a fragment of) the coinduction hypothesis. Formally, $G$ is defined as follows: \[
    G_F(H) \triangleq \nu X. F(X \cup H)
  \]
  It is not difficult to see that $G$ coincides with the standard greatest fixed point operator when $H = \emptyset$ (i.e., $G_F(\emptyset) = \nu F$).
  In particular, it means that proving $X \subseteq G_F(\emptyset)$ is equivalent to proving $X \subseteq \nu F$.
  The benefit is that the parameterized greatest fixed point admits the following \emph{incremental} reasoning rules: \begin{mathpar}
    \inferrule[Init]{X \subseteq G_F(\emptyset)}{X \subseteq \nu . F}
    \qquad
    \inferrule[Accumulate]{X \subseteq G_F(H \cup X)}{X \subseteq G_F(H)}
    \qquad
    \inferrule[Step]{X \subseteq F(H \cup G_F(H))}{X \subseteq G_F(H)}
    \label{Paco}
  \end{mathpar}

  The rule \textsc{Init} initializes a proof by \emph{parameterized coinduction} by replacing a goal of the form $X \subseteq \nu F$ with the equivalent $X \subseteq G_F(\emptyset)$.
  The rule \textsc{Accumulate} extends the current guess $H$ with the subset $X$ whose inclusion in the greatest fixed point should be proven.
  Finally, the rule \textsc{Step} enables to make progress by unfolding the greatest fixed point underlying the definition of $G$.
  We note that by definition of $G$, every element previously accumulated in $H$ can be used after applying the rule \textsc{Step}.
  To better illustrate how to exploit the rule of parameterized coinduction, we revisit the previous example and present a detailed incremental proof that $\mathcal{I}_1 \times \mathcal{I}_2 \subseteq \texttt{sim}$.
  
  \begin{example}[\Cref{ex:simple-sim}, revisited]
    \begin{align*}
      &\{ (q_0, r_0) \} \in \texttt{sim}\\
      \textrm{\color{gray}By \textsc{Init}}
      \impliedby &\{ (q_0, r_0) \} \in G_\texttt{simF}(\emptyset)\\
      \textrm{\color{gray}By \textsc{Accumulate}}
      \impliedby &\{ (q_0, r_0) \} \in G_\texttt{simF}(\{ q_0, r_0 \})\\
      \textrm{\color{gray}By \textsc{Step}}
      \impliedby &\exists \iostep{r_0}{\epsilon}{r}. (q_1, r) \in \{ q_0, r_0 \} \cup G_\texttt{simF}(\{ q_0, r_0 \})\\
      \textrm{\color{gray}Pick $r = r_1$}
      \impliedby &(q_1, r_1) \in G_\texttt{simF}(\{ q_0, r_0 \})\\
      \textrm{\color{gray}By \textsc{Step}}
      \impliedby &\exists \iostep{r_1}{a}{r}. (q_0, r) \in \{ q_0, r_0 \} \cup G_\texttt{simF}(\{ q_0, r_0 \})\\
      \textrm{\color{gray}Pick $r = r_0$}
      \impliedby &(q_0, r_0) \in \{ q_0, r_0 \} \cup G_\texttt{simF}(\{ q_0, r_0 \})\\
      \impliedby &(q_0, r_0) \in \{ q_0, r_0 \}
    \end{align*}
  \end{example}

  We note that in this example, even though $\{ (q_0, r_0) \}$ is not a postfixed point of $\texttt{simF}$, we never needed to accumulate more that just the initial states!
  We however note that it is sometimes necessary to accumulate more than just the currently targeted subset $X$. For this purpose, the following \textsc{Invariant} rule can easily be derived as a corollary of \textsc{Accumulate}. \begin{mathpar}
    \inferrule[Invariant]{X \subseteq H' \\ H' \subseteq G_F(H' \cup H)}{X \subseteq G_F(H)}
  \end{mathpar}
  \begin{proof}
    Suppose (1) $X \subseteq H'$, and (2) $H' \subseteq G_F(H' \cup H)$. By \textsc{Accumulate} and (2) we obtain $H' \subseteq G_F(H)$. By (1) it follows that $X \subseteq H' \subseteq G_F(H)$.
  \end{proof}

  \section{Direct Simulation}

  In the previous section, we discussed the use of simulation combined with parameterized coinduction as an interactive proof technique to show trace inclusion between two LTSs.
  A natural question to ask is whether the same incremental proof technique can also be used to prove \emph{language inclusion} of LTSs equipped with a fairness condition such as B\"uchi automata.
  Clearly, the standard notion of simulation is not suitable to check inclusion of B\"uchi automata, as it does not even mention accepting sets. For example, consider the following two automata:

  \begin{center}
    \begin{tikzpicture}[shorten >=1pt, node distance=1cm, auto, initial text=]
      \node[state,initial] (q_0) {$q_0$};
      \node[state,accepting] (q_1) [above right=of q_0] {$q_1$};
      \node[state,accepting] (q_2) [right=of q_0] {$q_2$};
      \node[left=of q_1] (A1) {$A_1:$};
      \path[->] 
        (q_0) edge node {b} (q_2)
        (q_0) edge node {a} (q_1)
        (q_1) edge [loop right] node {a} ()
        (q_2) edge [loop right] node {b} ();
    \end{tikzpicture}
    \begin{tikzpicture}[shorten >=1pt,node distance=2cm,on grid,auto,initial text=]
      \node[state,initial] (q_0) {$r_0$};
      \node[state,accepting] (q_1) [above right=of q_0] {$r_1$};
      \node[state] (q_2) [right=of q_0] {$r_2$};
      \node[left =of q_1] (A2) {$A_2:$};
      \path[->] 
        (q_0) edge node {a} (q_1)
        (q_0) edge node {b} (q_2)
        (q_1) edge [loop right] node {a} ()
        (q_2) edge [loop right] node {b} ();
    \end{tikzpicture}
  \end{center}

  Both automata have the same traces and they are even (bi)similar. However, they do not have the same languages! Indeed, the word $b^\omega$ is accepted by $A_1$ but rejected by $A_2$ as the only corresponding execution for $b^\omega$ loops in the state $r_2$, which is not an accepting state.
  To overcome this issue, a naive extension of standard simulation would be to synchronize visits to accepting states.
  In other words, whenever the left-state is accepting, we require the right-state to be accepting as well.
  This extension is usually referred to as \emph{direct simulation}, and can be defined coinductively as follows:\begin{align*}
    \fsimdirect &\triangleq \nu X . \{ \ (s_1, s_2) \mid (s_1 \in \mathcal{F}_1 \implies s_2 \in \mathcal{F}_2) \wedge \forall e. \forall \iostep{s_1}{e}{s'_1}. \exists \iostep{s_2}{e}{s'_2}. \ (s'_1, s'_2) \in X \ \}
  \end{align*}

  Since all visits to a left-accepting state are exactly synchronized with a visit to a right-accepting state, it is not too difficult to see that $\fsimdirect$ offers a sound proof technique for language inclusion.

  \begin{theorem}[Soundness]
    \label{thm:fsimdirect-sound}
    $(s_1, s_2) \in \fsimdirect \implies \lang(s_1) \subseteq \lang(s_2)$
  \end{theorem}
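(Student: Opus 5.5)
We prove the statement by coinduction on the outer greatest fixed point in the definition of $\lang_{A_2}$. The candidate coinduction hypothesis is
\[
  R \triangleq \{ (s_2, \tau) \mid \exists s_1. \ (s_1, s_2) \in \fsimdirect \wedge (s_1, \tau) \in \lang_{A_1} \}.
\]
By \Cref{coind} it suffices to show that $R$ is a postfixed point of the functor $\Phi(L) \triangleq \mu X. \, \Psi(L, X)$ whose greatest fixed point is $\lang_{A_2}$. Here $\Psi(L,X)$ is the union of the ``step into $X$'' clause and the ``accepting, step into $L$'' clause. We then instantiate $s_1, s_2$ with the pair from the statement to conclude $\lang(s_1) \subseteq \lang(s_2)$.

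The core of the argument is an inner induction. Fix $(s_1,s_2) \in \fsimdirect$ and $(s_1,\tau) \in \lang_{A_1}$. Unfolding $\lang_{A_1}$ once gives $(s_1,\tau) \in \mu X.\, \Psi_1(\lang_{A_1}, X)$. We prove by induction on this least fixed point that for every $s_2$ with $(s_1,s_2)\in\fsimdirect$ we have $(s_2,\tau) \in \mu X.\,\Psi_2(R, X)$. The induction motive is the set of pairs $(s_1,\tau)$ such that every direct-simulating $s_2$ satisfies the goal, and we show this motive is closed under $\Psi_1(\lang_{A_1}, \cdot)$. There are two cases.

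\begin{enumerate}
  \item \emph{Non-accepting step.} Here $\tau = e\cdot\tau'$ and $s_1 \xrightarrow{e} s'_1$ with $(s'_1,\tau')$ satisfying the motive. Unfolding $\fsimdirect$ gives $s_2 \xrightarrow{e} s'_2$ with $(s'_1,s'_2) \in \fsimdirect$. The induction hypothesis yields $(s'_2,\tau') \in \mu X.\,\Psi_2(R,X)$, and we close the case with the first clause of $\Psi_2$.
  \item \emph{Accepting step.} Here $s_1 \in \mathcal{F}_1$, $s_1 \xrightarrow{e} s'_1$ and $(s'_1,\tau') \in \lang_{A_1}$. The direct simulation condition gives $s_2 \in \mathcal{F}_2$, and unfolding gives a matching $s_2 \xrightarrow{e} s'_2$ with $(s'_1,s'_2)\in\fsimdirect$. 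Hence $(s'_2,\tau') \in R$ by the definition of $R$, and we close with the second (accepting) clause of $\Psi_2$.
\end{enumerate}

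In both cases we use monotonicity of $\mu X.\,\Psi_2(R,X)$ in its arguments, which is immediate since $\Psi_2$ is monotone in both $L$ and $X$.

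The main obstacle is the shape of the induction. The induction is over the inner $\mu$ of the \emph{left} automaton's language, so the statement must quantify universally over $s_2$ inside the motive rather than fixing $s_2$ up front. This is because $s_2$ changes at each step, and a fixed $s_2$ would make the hypothesis too weak. A second, subtler point is that in the accepting case we must land directly in $R$ and not in the inner fixed point. This is exactly what the synchronisation $s_1\in\mathcal{F}_1 \implies s_2\in\mathcal{F}_2$ provides: right-acceptance happens at the very same step as left-acceptance, so no delay bookkeeping is needed. In a parameterized-coinduction presentation, one could equivalently use \textsc{Init} and \textsc{Accumulate} on $R$ and perform the same inner induction, but the plain coinduction principle of \Cref{coind} suffices here.
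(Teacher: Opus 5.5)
Your proof is correct and follows the same route as the paper. You use coinduction on $\lang(s_2)$ with the hypothesis $R$, then induction on the inductive part of $\lang(s_1)$ with the motive generalized over $s_2$, closing the accepting case directly through the synchronisation $s_1 \in \mathcal{F}_1 \implies s_2 \in \mathcal{F}_2$. Your write-up fills in the details that the paper's one-line sketch leaves implicit.
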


  \begin{proof}
    The proof goes by coinduction on $\lang(s_2)$ (remember that $\lang$ is a coinductive predicate)
    and then by induction on the inductive part of $\lang(s_1)$.
  \end{proof}

  Contrary to standard simulation, direction simulation (rightfully) rejects the two automata from the previous example.
  However, it can prove language inclusion of the previous example in the other direction (i.e., $\lang(A_2) \subseteq \lang(A_1)$).

  \begin{example}
    \label{ex:intro}
    We prove that $\lang_{A_2}(r_0) \subseteq \lang_{A_1}(q_0)$ by showing that $(r_0, q_0)$ is included in $\fsimdirect$.
    We pick $R = \{ (r_0, q_0), (r_1, q_1), (r_2, q_2) \}$
    and we observe that $(r_0, q_0) \in R \subseteq \fsimdirectF(R)$.
    Thus, by \Cref{coind}, $(r_0, q_0) \in \fsimdirect$ and by \Cref{thm:fsimdirect-sound} it follows that $\lang_{A_2}(r_0) \subseteq \lang_{A_1}(q_0)$.
  \end{example}

  \subsection{A Simple Deductive System via Parameterized Coinduction}

  Although direct simulation is a very strong notion of fairness-preserving simulation, and thus limited in applicability (we discuss these limitations in \Cref{sec:limitations-direct-sim}), its simplicity makes it a good starting point for interactive proofs.
  Although guessing a direct simulation is not difficult for small B\"uchi automata, for larger automata, and in particular automata with infinite state spaces, guessing a direct simulation might be more difficult.
  Instead, as discussed in the preliminaries, one can exploit \textit{parameterized coinduction} to obtain a set of customized reasoning rules (a deductive system) for proving that pairs of states are contained in $\fsimdirect$.

  The deductive system we propose is operating on triples $\gtriple{H}{s_1}{\preccurlyeq}{s_2}$ and $\triple{H}{s_1}{\preccurlyeq}{s_2}$ where $H$ is a relation between states representing a fragment of a direct simulation, and $s_1$ and $s_2$ are two states.
  Intuitively, a solid box $\cbox{H}$ indicates that $H$ is \emph{guarded} and cannot immediately be used, whereas a dashed box $\obox{H}$ indicates that $H$ can be used to conclude that the target state are in direct simulation.
  This notation is borrowed from \cite{for-free} and \cite{hyco}, and these triples are defined in terms of the parameterized greatest fixed point operator as follows: \begin{align*}
    \gtriple{H}{s_1}{\preccurlyeq}{s_2} &\triangleq (s_1, s_2) \in G_{\fsimdirectF}(H)\\
    \triple{H}{s_1}{\preccurlyeq}{s_2} &\triangleq (s_1, s_2) \in H \cup G_{\fsimdirectF}(H)
  \end{align*}

  Note that here, $\fsimdirectF$ denotes the monotone functor underlying the definition of $\fsimdirect$.
  Throughout the paper, we reuse this notation convention heavily: for any coinductive definition of the form $\texttt{rel} \triangleq \nu X . F(X)$, we note $\texttt{relF}$ for $F$.

  Since our triples are defined in terms of the parameterized greatest fixed point operator (see \Cref{sec:paco}), they inherit the reasoning principles of parameterized coinduction.
  In particular, $\gtriple{\emptyset}{s_1}{\preccurlyeq}{s_2} \iff (s_1, s_2) \in \fsimdirect$ and, additionally, the rules presented in \Cref{fig:direct-rules} can immediately be derived.

  \begin{figure}
    \begin{mathpar}
      \inferrule[Final]{s_2 \in \mathcal{F}_2 \\ \forall e. \forall \iostep{s_1}{e}{s'_1}. \exists \iostep{s_2}{e}{s'_2}. \ \triple{H}{s'_1}{\preccurlyeq}{s'_2}}{\gtriple{H}{s_1}{\preccurlyeq}{s_2}}\quad
      \inferrule[Step]{s_1 \notin \mathcal{F}_1\\ \forall e. \forall \iostep{s_1}{e}{s'_1}. \exists \iostep{s_2}{e}{s'_2}. \ \triple{H}{s'_1}{\preccurlyeq}{s'_2}}{\gtriple{H}{s_1}{\preccurlyeq}{s_2}}\\
      \inferrule[Cycle]{(s_1, s_2) \in H}{\triple{H}{s_1}{\preccurlyeq}{s_2}} \qquad
      \inferrule[Guard]{\gtriple{H}{s_1}{\preccurlyeq}{s_2}}{\triple{H}{s_1}{\preccurlyeq}{s_2}} \qquad
      \inferrule[Invariant]{(s_1, s_2) \in H'\\\forall (s'_1, s'_2) \in H'. \ \gtriple{H \cup H'}{s'_1}{\preccurlyeq}{s'_2}}{\gtriple{H}{s_1}{\preccurlyeq}{s_2}}
    \end{mathpar}
    \caption{Rules for $\fsimdirect$}
    \label{fig:direct-rules}
    \Description{Rules for $\fsimdirect$}
  \end{figure}

  The \textsc{Final} and \textsc{Step} rules are stepping through the two targeted automata, ensuring that every step in the left-hand automaton can be reproduced by at least one step in the right-hand automaton.
  We note that following the intuition of direct simulation, progress can only be made via \textsc{Step} and \textsc{Final} if the current left-state is rejecting ($s_1 \notin \mathcal{F}_1$) or if the current right-state is accepting ($s_2 \in \mathcal{F}_2$).
  Additionally, we note that making progress via \textsc{Step} or \textsc{Final} always releases the guard.
  The remaining rules \textsc{Cycle}, \textsc{Guard} and \textsc{Invariant} are for handling the parameter $H$.
  \textsc{Cycle} allows to conclude a proof whenever $H$ is unguarded
  and already contains the targeted pair of states.
  The rule $\textsc{Guard}$ restores the guard around an unguarded hypothesis. Finally, the rule \textsc{Invariant} extends the
  current hypothesis $H$ with a relation $H'$.
  Important, $H'$ needs to contain at least the currently targeted pair of states. Further, after applying $\textsc{Invariant}$, the proof resumes from an arbitrary pair of states in $H'$ and the guard is maintained (thus requiring to later make progress via \textsc{Final} or \textsc{Step}).

  \begin{example}
    \label{ex:intro_deductive}
    We prove that $\lang_{A_2}(r_0) \subseteq \lang_{A_1}(q_0)$ by showing that $\gtriple{\emptyset}{r_0}{\preccurlyeq}{q_0}$.
    \begin{align*}
      &\gtriple{\emptyset}{r_0}{\preccurlyeq}{q_0}\\
      \textrm{\color{gray} By \textsc{Step} and \textsc{Guard}} \impliedby&\gtriple{\emptyset}{r_1}{\preccurlyeq}{q_1} \wedge \gtriple{\emptyset}{r_2}{\preccurlyeq}{q_2}\\
      \textrm{\color{gray} By \textsc{Invariant}} \impliedby&\gtriple{(r_1, q_1)}{r_1}{\preccurlyeq}{q_1} \wedge \gtriple{ (r_2, q_2) }{r_2}{\preccurlyeq}{q_2}\\
      \textrm{\color{gray} By \textsc{Final} as $q_1,q_2 \in \mathcal{F}_2$} \impliedby&\triple{(r_1, q_1)}{r_1}{\preccurlyeq}{q_1} \wedge \triple{ (r_2, q_2) }{r_2}{\preccurlyeq}{q_2}\\
      \textrm{\color{gray} By \textsc{Cycle}}\impliedby&(r_1, q_1) \in \{(r_1, q_1)\}\wedge (r_2, q_2) \in \{ (r_2, q_2) \}
    \end{align*}
    We note that in this proof, we omitted the brackets around sets of pairs for readability (e.g., we noted $\cbox{(r_1, q_1)}$ instead of $\cbox{\{(r_1, q_1)\}}$). We adopt this notation for the remaining of the paper.
  \end{example}

  \subsection{Limitations of Direct Simulation}
  \label{sec:limitations-direct-sim}

  In the previous section, we have demonstrated how direct simulation, a simple extension of the standard notion of simulation, can be used to prove language inclusion of B\"uchi automata. Furthermore, we showed that combined with parameterized coinduction, direct simulation can be presented as a simple and intuitive deductive system to prove language containment between two B\"uchi automata.
  Nonetheless, direct simulation is still way too strong a simulation relation and it is easy to see that $\fsimdirect$ is incomplete.

  \begin{example}[Incompleteness of $\fsimdirect$]
    Let us consider the two following B\"uchi automata $A_1$ and $A_2$.
    \begin{center}
      \begin{tikzpicture}[shorten >=1pt,node distance=2cm,on grid,auto,initial text=]
            \node[state,initial,accepting] (q_0) {$q_0$};
            \node[above=of q_0, xshift=-10pt, yshift=-30pt] (A1) {$A_1:$};
            \node[state] (q_1) [right=of q_0] {$q_1$};
            \path[->]
              (q_0) edge [bend left] node {a} (q_1)
              (q_1) edge [bend left] node {a} (q_0);
        \end{tikzpicture}
        \begin{tikzpicture}[shorten >=1pt,node distance=2cm,on grid,auto,initial text=]
          \node[state,initial] (q_0) {$r_0$};
          \node[above=of q_0, xshift=-10pt, yshift=-30pt] (A1) {$A_2:$};
          \node[state,accepting] (q_1) [right=of q_0] {$r_1$};
          \path[->]
            (q_0) edge [bend left] node {a} (q_1)
            (q_1) edge [bend left] node {a} (q_0);
        \end{tikzpicture}
      \end{center}

    Clearly, $\mathcal{L}(A_1) \subseteq \mathcal{L}(A_2)$ as both automata accepts exactly the language $L = \{ a^\omega \}$. However, $(q_0, r_0) \notin \fsimdirect$.
    The only accepting run for $a^\omega$ in $A_1$ visits the accepting state at even positions, whereas the only accepting run in $A_2$ only visits an accepting state at odd positions.
    To be more formal, for these two specific automata, the largest direct simulation is $ \fsimdirect = \{ (q_0, r_1), (q_1, r_0), (q_1, r_1) \}$ and $(q_0, r_0) \notin \fsimdirect$.
  \end{example}
  
  Beyond this specific example, simulation techniques are notoriously incomplete without making further assumptions on the transition systems, even without having to handle fairness conditions \cite{refinement-mappings, simple-prophecies, Baier2008PrinciplesOM}.
  In fact, any refinement of standard simulation (i.e., any relation $R \subseteq \texttt{sim}$) is necessarily incomplete for language inclusion (because \texttt{sim} is incomplete for trace inclusion, and trace inclusion is equivalent to language inclusion when marking all states as accepting).
  In spite of this inherent source of incompleteness, the simplicity of direct simulation makes it an appealing proof technique.
  Further, as demonstrated in \Cref{ex:intro}, direct simulation is well-suited for interactive deductive reasoning.

  \section{Delayed Notions of Simulation}

  \subsection{Delay Simulation}
  \label{sec:delaysim}

  Although any refinement of standard simulation is necessarily incomplete for language inclusion, direct simulation is unnecessarily strong:
  by definition, it is sensitive to the exact number of computation steps that separate visits to accepting states.
  This renders the technique essentially unusable for any verification task.
  Indeed, in practice, a program (combined with fairness assumptions)
  will typically execute for a data-dependent number of steps in between intermediate goals.
  On the other hand, specification automata are fixed and require more precisely timed visits to accepting states.
  Consequently, $\fsimdirect$ simulation is too strong to reason about programs and specifications.

  A weaker notion of fairness-preserving simulation is delay simulation \cite{delay-simulation}.
  As for direct simulation, delay simulation extends standard simulation by further enforcing that infinitely many visits to a left-accepting state are simulated by infinitely many visits to a right-accepting state. However, contrary to direct simulation, delay simulation does not force visits to accepting states to be exactly synchronized. Instead, whenever a left-accepting state is encountered, delay simulation permits postponing (for a bounded number of steps) the moment when a corresponding right-accepting state is going to be reached.
  This intuition is achieved by nesting a coinductive and an inductive relation.

  \begin{definition}[Delay Simulation]
    \label{def:delaysim}
    Let $(\mathcal{S}_1, \mathcal{E}, \mathcal{I}_1, \to_1, \mathcal{F}_1)$ and $(\mathcal{S}_2, \mathcal{E}, \mathcal{I}_2, \to_2, \mathcal{F}_2)$ be two B\"uchi automata over the same set of events $\mathcal{E}$. We define the two relations $\fsimdelayR$ and $\fsimdelayL$ as follows: \begin{align*}
      \fsimdelayR(X) &\triangleq \mu Y .\\
      \textsc{\color{gray}(R-Final)} & \quad \{ \ (s_1, s_2) \mid s_2 \in \mathcal{F}_2 \wedge \forall e. \forall \iostep{s_1}{e}{s'_1}. \exists \iostep{s_2}{e}{s'_2}. \ (s'_1, s'_2) \in X \ \} \ \cup\\
      \textsc{\color{gray}(R-Delay)} & \quad \{ \ (s_1, s_2) \mid \forall e. \forall \iostep{s_1}{e}{s'_1}. \exists \iostep{s_2}{e}{s'_2}. \ (s'_1, s'_2) \in Y \ \} \\
      \fsimdelayL &\triangleq \nu X .\\
      \textsc{\color{gray}(L-to-R)} & \quad \{ \ (s_1, s_2) \mid (s_1, s_2) \in \fsimdelayR(X) \ \} \ \cup\\
      \textsc{\color{gray}(L-Step)} & \quad \{ \ (s_1, s_2) \mid s_1 \notin \mathcal{F}_1 \wedge \forall e. \forall \iostep{s_1}{e}{s'_1}. \exists \iostep{s_2}{e}{s'_2}. \ (s'_1, s'_2) \in X \ \}
    \end{align*}
  \end{definition}

  Our formulation of delay simulation combines an inductive relation 
  $\fsimdelayR$ and a coinductive relation $\fsimdelayL$.
  The coinductive relation $\fsimdelayL$ is the entry point, and it tracks left-accepting states.
  Whenever a left-accepting state is encountered, $\fsimdelayL$
  transfers the control to $\fsimdelayR$, which intuitively requires to reach a right-accepting state.
  More precisely, $\fsimdelayR$ contains all pairs of states from which, no matter the steps taken by the left-hand automaton, the right-hand automaton can always mimic these steps in such a way that a right-accepting state is eventually found.
  Once such a state is attained, $\fsimdelayR$ transfers the control back to $\fsimdelayL$.
  It is important to note that when switching from $\fsimdelayL$ to $\fsimdelayR$ (see {\color{gray}(\textsc{L-to-R})} in \Cref{def:delaysim}), no computation steps need to be taken.
  However, switching back from $\fsimdelayR$ to $\fsimdelayL$ (see {\color{gray}(\textsc{R-final})} in \Cref{def:delaysim}) does require taking a step. Without this subtle difference, $\fsimdelayL$
  would contain any pair of states $(s_1, s_2)$ with $s_2 \in \mathcal{F}_2$, and it would not be a sound proof technique for language inclusion.
  In general, defining fair simulation relations require extra care.
  Errors like off-by-one-step errors are easily introduced and one has to be particularly careful to make sure that \emph{progress} is always made before corecursing.

  We further note that there are other equivalent ways to define $\fsimdelay$.
  For example, instead of requiring to take a step before switching back to $\fsimdelayL$ in {\color{gray}(\textsc{R-Final})}, one could also take a step before switching to $\fsimdelayR$ in {\color{gray}(\textsc{L-to-R})} and add a third rule allowing to
  corecurse when a right-accepting state is encountered in $\fsimdelayL$.
  In principle, the style of definition does not matter much.
  However, we observed that in practice, the exact choice of formalization can significantly impact the complexity of the soundness proof.

  Before proving the soundness of $\fsimdelay$ for language inclusion, we make two useful observations. First, we observe that $\fsimdelayR(\fsimdelayL) \subseteq \fsimdelayL$.

  \begin{lemma}
     \label{lem:delaysim_reach}
     $\fsimdelayR(\fsimdelayL) \subseteq \fsimdelayL$ 
   \end{lemma}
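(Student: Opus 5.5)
The statement is essentially the \textsc{L-to-R} clause read at the fixed point, so the plan is to combine that clause with the fixed-point equation of $\fsimdelayL$.

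Write $\fsimdelayFL$ for the functor underlying $\fsimdelayL$, that is,
\[
\fsimdelayFL(X) = \fsimdelayR(X) \cup \{ (s_1,s_2) \mid s_1 \notin \mathcal{F}_1 \wedge \forall e. \forall \iostep{s_1}{e}{s'_1}. \exists \iostep{s_2}{e}{s'_2}. (s'_1,s'_2)\in X \}.
\]

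First, I would check that $\fsimdelayFL$ is monotone, so that Tarski's theorem gives $\fsimdelayL = \fsimdelayFL(\fsimdelayL)$. The second disjunct is clearly monotone in $X$. For the first disjunct, $\fsimdelayR(X)$ is the least fixed point of a functor $\Phi_X(Y)$ that is monotone in both $X$ and $Y$. If $X \subseteq X'$, then $\Phi_X(\fsimdelayR(X')) \subseteq \Phi_{X'}(\fsimdelayR(X')) = \fsimdelayR(X')$. So $\fsimdelayR(X')$ is a prefixed point of $\Phi_X$, and by the induction principle (the dual of \Cref{coind}) we get $\fsimdelayR(X) \subseteq \fsimdelayR(X')$. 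This monotonicity argument is the only step with any content.

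Second, I would use the fixed-point equation, specifically the fact that $\fsimdelayL$ is a prefixed point: $\fsimdelayFL(\fsimdelayL) \subseteq \fsimdelayL$. This holds because the greatest fixed point is a fixed point, not merely a postfixed point; it also follows from \Cref{coind} applied to $X = \fsimdelayFL(\fsimdelayL)$ with coinduction hypothesis $R = \fsimdelayFL(\fsimdelayL)$, using monotonicity. The \textsc{L-to-R} disjunct then gives
\[
\fsimdelayR(\fsimdelayL) \subseteq \fsimdelayFL(\fsimdelayL) \subseteq \fsimdelayL,
\]
which is the claim.

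I expect no real obstacle beyond establishing monotonicity of the nested $\mu$ in its parameter, together with the fold direction of the greatest fixed point. In Rocq with paco, these correspond to proving the functor monotone and to the standard unfold/fold lemma, after which the proof is a single application of the \textsc{L-to-R} constructor. Alternatively, one can use the \textsc{Step}-style rule of parameterized coinduction with $H = \emptyset$ and weaken $\fsimdelayL \subseteq \emptyset \cup G(\emptyset)$.
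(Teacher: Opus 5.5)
Your proof is correct and matches the paper's: both use the fixed-point equation to rewrite the goal as membership in $\fsimdelayFL(\fsimdelayL)$, then choose the \textsc{L-to-R} disjunct, which is exactly the hypothesis. You also check explicitly that $\fsimdelayR$ is monotone in its parameter, which is needed for $\fsimdelayL$ to be a fixed point; the paper leaves this implicit.
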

   \begin{proof}
    Let $(s_1, s_2) \in \fsimdelayR(\fsimdelayL)$,
    we have to show that $(s_1, s_2) \in \fsimdelayL$.
    By unfolding $\fsimdelayL$, it is equivalent to prove $(s_1, s_2) \in \fsimdelayFL(\fsimdelayL)$.
    Using the {\color{gray}(\textsc{L-to-R})} disjunct, it is enough to prove
    $(s_1, s_2) \in \fsimdelayR(\fsimdelayL)$.
    This is exactly our assumption.
  \end{proof}

  Another useful observation is that $\fsimdelayL$ is a postfixed point of $\texttt{simF}$. By \Cref{coind}, this implies that $\fsimdelayL$ is stronger than standard simulation (i.e., $\fsimdelayL \subseteq \texttt{sim}$).

  \begin{lemma}
    \label{lem:delaysim_step}
    $(s_1, s_2) \in \fsimdelayL \implies \forall e. \forall \iostep{s_1}{e}{s'_1}. \exists \iostep{s_2}{e}{s'_2}. \ (s'_1, s'_2) \in \fsimdelayL$
  \end{lemma}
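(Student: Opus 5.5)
We unfold the greatest fixed point: since $\fsimdelayL = \fsimdelayFL(\fsimdelayL)$, the assumption $(s_1, s_2) \in \fsimdelayL$ places the pair in one of the two disjuncts of $\fsimdelayFL(\fsimdelayL)$, and we argue by cases.

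In the \textsc{L-Step} case the conclusion is literally the second conjunct of the disjunct: every step $\iostep{s_1}{e}{s'_1}$ is matched by some $\iostep{s_2}{e}{s'_2}$ with $(s'_1, s'_2) \in \fsimdelayL$. Nothing more needs to be done.

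In the \textsc{L-to-R} case we have $(s_1, s_2) \in \fsimdelayR(\fsimdelayL)$. Here the inductive fixed point $\fsimdelayR(\fsimdelayL) = \mu Y. \ldots$ must be unfolded once more, giving two subcases.
\begin{itemize}
  \item \textsc{R-Final}: each left step is matched by a right step landing in $X = \fsimdelayL$, which is exactly what we need.
  \item \textsc{R-Delay}: each left step is matched by a right step landing in $Y$, which at the fixed point is $\fsimdelayR(\fsimdelayL)$ itself. We then apply \Cref{lem:delaysim_reach} to lift $(s'_1, s'_2) \in \fsimdelayR(\fsimdelayL)$ to $(s'_1, s'_2) \in \fsimdelayL$.
\end{itemize}

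No induction on $\fsimdelayR$ is needed, only a single unfolding of the least fixed point (valid since $\mu G = G(\mu G)$ by Tarski). The only point of care, and the mild obstacle, is exactly this: unfolding rather than inducting, and recognising that the \textsc{R-Delay} continuation lives in $\fsimdelayR(\fsimdelayL)$, so that \Cref{lem:delaysim_reach} closes the gap. In Rocq this corresponds to inverting the inductive predicate once and invoking the previous lemma.

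As a corollary, $\fsimdelayL \subseteq \texttt{simF}(\fsimdelayL)$, so by \Cref{coind} we obtain $\fsimdelayL \subseteq \texttt{sim}$, as remarked before the statement.
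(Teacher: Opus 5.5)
Your proof is correct and follows the paper's route: unfold $\fsimdelayL$ once, dispatch \textsc{L-Step} directly, and close the \textsc{L-to-R} case with \Cref{lem:delaysim_reach}. Your explicit one-step unfolding of $\fsimdelayR(\fsimdelayL)$ is what makes that case go through, since in the \textsc{R-Delay} subcase you apply \Cref{lem:delaysim_reach} to the successors $(s'_1, s'_2)$; the paper's written proof instead applies the lemma to $(s_1, s_2)$ itself, which only gives back the hypothesis.
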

  \begin{proof}
    Suppose $(s_1, s_2) \in \fsimdelayL$.
    By unfolding the greatest fixed point we know that in particular $(s_1, s_2) \in \fsimdelayFL(\fsimdelayL)$.
    We then proceed by case analysis on $\fsimdelayFL$.
    The {\color{gray}(\textsc{L-Step})} case is trivial.
    In the {\color{gray}(\textsc{L-to-R})} case, we have $(s_1, s_2) \in \fsimdelayR(\fsimdelayL)$ and by 
    \Cref{lem:delaysim_reach} it immediately follows that $(s_1, s_2) \in \fsimdelayL$.
  \end{proof}

  Using \Cref{lem:delaysim_reach} and \Cref{lem:delaysim_step}, we are ready to prove the soundness of $\fsimdelay$.

  \begin{theorem}[Soundness of $\fsimdelay$]
    \label{thm:fsimdelay-sound}
    $(s_1, s_2) \in \fsimdelayL \implies \lang(s_1) \subseteq \lang(s_2)$
  \end{theorem}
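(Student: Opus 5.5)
We establish the stronger statement
\[
P \triangleq \{ (s_2, \tau) \mid \exists s_1.\ (s_1, s_2) \in \fsimdelayL \wedge \tau \in \lang(s_1) \} \subseteq \lang.
\]
The proof is by coinduction on $\lang = \nu L.\mu X.\ldots$. By \Cref{coind}, it suffices to show that $P$ is contained in $\mu X. \Phi(X, P)$, where $\Phi(X,L)$ is the body of the definition of $\lang$. In other words, we must show that every $(s_2, \tau) \in P$ reaches the coinductive part in finitely many steps, with the continuation landing back in $P$.

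Fix $(s_1, s_2) \in \fsimdelayL$ and $\tau \in \lang(s_1)$. We make progress by an outer induction on the inductive part of $\lang(s_1)$, i.e.\ on the number of steps until $s_1$'s run next visits $\mathcal{F}_1$. Unfold $\fsimdelayL$ and split on its two disjuncts.

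\textbf{Case (L-Step).} Here $s_1 \notin \mathcal{F}_1$, so the membership $\tau \in \lang(s_1)$ must come from the inductive (non-accepting) disjunct. This gives $\tau = e\cdot\tau'$, a step $\iostep{s_1}{e}{s'_1}$, and $\tau' \in \mu X$-part at $s'_1$. The simulation supplies $\iostep{s_2}{e}{s'_2}$ with $(s'_1,s'_2) \in \fsimdelayL$. The outer induction hypothesis then yields $(s'_2,\tau')$ in the $\mu X$-part, and we close with the $X$-disjunct for $s_2$.

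\textbf{Case (L-to-R).} Here $(s_1,s_2) \in \fsimdelayR(\fsimdelayL)$. We now perform an inner induction on the least fixed point $\fsimdelayR$, generalized over $s_1$ and $\tau$: for any $(t_1,t_2) \in \fsimdelayR(\fsimdelayL)$ and $\tau \in \lang(t_1)$, we show $(t_2,\tau) \in \mu X.\Phi(X,P)$.

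In every case $\tau \in \lang(t_1)$ supplies some first step $\iostep{t_1}{e}{t'_1}$ with $\tau' \in \lang(t'_1)$. This holds regardless of whether we are in the accepting or the non-accepting disjunct, since the inner $\mu$-part still lies inside $\lang$ after folding the $\nu$.

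\emph{Subcase (R-Final).} We have $t_2 \in \mathcal{F}_2$ and a matching step $\iostep{t_2}{e}{t'_2}$ with $(t'_1,t'_2) \in \fsimdelayL$. Then $(t'_2,\tau') \in P$, and the accepting disjunct closes the goal.

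\emph{Subcase (R-Delay).} We get $\iostep{t_2}{e}{t'_2}$ with $(t'_1,t'_2) \in \fsimdelayR(\fsimdelayL)$. The inner induction hypothesis gives $(t'_2,\tau')$ in the $\mu$-part, and the $X$-disjunct closes.

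The inner induction must not depend on the outer one. This is fine: it uses only the facts that $\tau \in \lang(t_1)$ yields a step into $\lang$, and that \Cref{lem:delaysim_reach} ensures $\fsimdelayR(\fsimdelayL) \subseteq \fsimdelayL$ wherever we need membership in $P$.

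\textbf{Main obstacle.} The delicate point is the interplay of the two inductions. In case L-Step the termination measure must come from $\lang(s_1)$; in case L-to-R it must come from $\fsimdelayR$. The invariant $P$ therefore has to be stated using the full $\lang(s_1)$ rather than its $\mu$-part. To make the L-Step case go through, I would first unfold $\lang(s_1)$ to $\mu X.\Phi(X,\lang)$ and induct on that, strengthening the claim to: for every $\tau$ in the $\mu$-part at $s_1$ and every $s_2$ with $(s_1,s_2) \in \fsimdelayL$, the pair $(s_2,\tau)$ lies in $\mu X.\Phi(X,P)$.

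In the accepting-disjunct case of this outer induction we have $s_1 \in \mathcal{F}_1$. L-Step is then impossible, so we are forced into L-to-R and hand off to the inner lemma. This is exactly where the visit to a left-accepting state is converted into an eventual right-accepting visit. \Cref{lem:delaysim_step} is used in the case analysis to extract matching steps whenever only membership in $\fsimdelayL$ is known.
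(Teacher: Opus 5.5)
Your proposal is correct and follows essentially the same route as the paper: coinduction on $\lang(s_2)$, an outer induction on the inductive part of $\lang(s_1)$, and a nested induction on $\fsimdelayR$ that runs until a right-accepting state is reached, where the coinduction hypothesis closes the goal. The only difference is cosmetic. You case-split on the two disjuncts of the unfolded delay simulation and send every \textsc{L-to-R} pair straight into the inner induction. The paper instead splits on whether $s_1 \in \mathcal{F}_1$, and in the non-accepting case uses \Cref{lem:delaysim_step} to keep stepping with the outer induction hypothesis.
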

  \begin{proof} (Sketch)
    The proof proceeds by coinduction on $\lang(s_2)$, and then by induction on the inductive part of $\lang(s_1)$.
    In the cases where the state $s_1$ is final, we do a second induction on $\fsimdelayR$, applying \Cref{lem:delaysim_step} along the way until a right-accepting state is reached. We then conclude by coinduction.
    In the case where $s_1$ is not final, we conclude using \Cref{lem:delaysim_step} and the induction hypothesis.
  \end{proof}

  \subsection{Basic Deductive System and Examples}

  Instead of giving examples of proofs by delay simulation using the raw definition of $\fsimdelay$, we instead directly present its associated deductive system.
  The deductive system is operating on the following three kinds of triples: \begin{align*}
    \gtripleDelay{H}{s_1}{\relL}{s_2} &\triangleq (s_1, s_2) \in G_{\fsimdelayFL}(H)\\
    \tripleDelay{H}{s_1}{\relL}{s_2} &\triangleq (s_1, s_2) \in H \cup G_{\fsimdelayFL}(H)\\
    \gtripleDelay{H}{s_1}{\relR}{s_2} &\triangleq (s_1, s_2) \in \fsimdelayR(H \cup G_{\fsimdelayFL}(H))
  \end{align*}

  Here, the subscripts \texttt{L} and \texttt{R} indicates whether we are currently
  tracking \textbf{\texttt{L}}eft-accepting states (with $\fsimdelayL$) or
  currently searching for a \textbf{\texttt{R}}ight-accepting one (with $\fsimdelayR)$. As for $\fsimdirect$, specializing the rules of parameterized coinduction (see \Cref{Paco}) immediately gives a basic set of reasoning principles which we list in
  \Cref{fig:rules_esim}.

  \begin{figure}[ht!]
    \begin{mathpar}
      \inferrule[L-to-R]{\gtripleDelay{H}{s_1}{\relR}{s_2}}{\gtripleDelay{H}{s_1}{\relL}{s_2}}\qquad
      \inferrule[L-Step]{s_1 \notin \mathcal{F}_1\\\forall e. \forall \iostep{s_1}{e}{s'_1}. \exists \iostep{s_2}{e}{s'_2}. \ \tripleDelay{H}{s'_1}{\relL}{s'_2}}{\gtripleDelay{H}{s_1}{\relL}{s_2}}\\
      \inferrule[R-Delay]{\forall e. \forall \iostep{s_1}{e}{s'_1}. \exists \iostep{s_2}{e}{s'_2}. \ \gtripleDelay{H}{s'_1}{\relR}{s'_2}}{\gtripleDelay{H}{s_1}{\relR}{s_2}}\qquad
      \inferrule[R-Final]{s_2 \in \mathcal{F}_2\\\forall e. \forall \iostep{s_1}{e}{s'_1}. \exists \iostep{s_2}{e}{s'_2}. \ \tripleDelay{H}{s'_1}{\relL}{s'_2}}{\gtripleDelay{H}{s_1}{\relR}{s_2}}\\
      \inferrule[L-Cycle]{(s_1, s_2) \in H}{\tripleDelay{H}{s_1}{\relL}{s_2}} \qquad
      \inferrule[L-Guard]{\gtripleDelay{H}{s_1}{\relL}{s_2}}{\tripleDelay{H}{s_1}{\relL}{s_2}} \qquad
      \inferrule[L-Invariant]{(s_1, s_2) \in H'\\\forall (s'_1, s'_2) \in H'. \ \gtripleDelay{H \cup H'}{s'_1}{\relL}{s'_2}}{\gtripleDelay{H}{s_1}{\relL}{s_2}}
    \end{mathpar}
    \caption{Basic rules for $\fsimdelay$}
    \label{fig:rules_esim}
    \Description{Rules}
  \end{figure}

  The rules \textsc{L-to-R}, \textsc{L-Step}, \textsc{R-Delay} and \textsc{R-Final} are immediately obtained by unfolding the definition of $\fsimdelayR$ and by instantiating the \textsc{Step} rule of parameterized coinduction with $\fsimdelayFL$ as the underlying monotone functor.
  We note that without further assumptions, the context $H$ can only be exploited and modified (via the rules \textsc{L-Cycle}, \textsc{L-Guard}, and \textsc{L-Invariant}) when the goal is an \textsc{L}-triple $\gtripleDelay{H}{s_1}{\relL}{s_2}$ or $\tripleDelay{H}{s_1}{\relL}{s_2}$!
  Indeed, when the goal is a \texttt{R}-triple $\gtripleDelay{H}{s_1}{\relR}{s_2}$, the context $H$ is guarded by an application of the inductive relation $\fsimdelayR$, thus forbidding to exploit the rules of parameterized coinduction.
  In \Cref{sec:more-rules}, we discuss this limitation, and we show that with a little more work, \texttt{R}-triples can also be equipped with additional rules to modify the context. For now, we focus on an example using only the basic rules of \Cref{fig:rules_esim}.
  
  \begin{example}
  Recall the following two automata from the previous section:
  \begin{center}
    \begin{tikzpicture}[shorten >=1pt,node distance=2cm,on grid,auto,initial text=]
      \node[state,initial,accepting] (q_0) {$q_0$};
      \node[state] (q_1) [right=of q_0] {$q_1$};
      \path[->] (q_0) edge [bend left] node {a} (q_1)
        (q_1) edge [bend left] node {a} (q_0);
      \end{tikzpicture}
    \begin{tikzpicture}[shorten >=1pt,node distance=2cm,on grid,auto,initial text=]
      \node[state,initial] (q_0) {$r_0$};
      \node[state,accepting] (q_1) [right=of q_0] {$r_1$};
      \path[->] (q_0) edge [bend left] node {a} (q_1)
        (q_1) edge [bend left] node {a} (q_0);
      \end{tikzpicture}
  \end{center}
  
  We already discussed that there exists no direct simulation between these two automata. However, they can be proven to be language included by delay simulation.
  In particular, we prove that $\gtripleDelay{\emptyset}{q_0}{\relL}{r_0}$.

  We start by using the \textsc{L-Invariant} rule to add the current states to the context.
  \begin{align*}
    &\gtripleDelay{\emptyset}{q_0}{\relL}{r_0} \\
    \textrm{\color{gray} By \textsc{L-Invariant}} \impliedby&\gtripleDelay{(q_0,r_0)}{q_0}{\relL}{r_0} 
  \end{align*}
  As $s_0$ is final, the only rule we can apply to make further progress is the \textsc{L-to-R} rule, thus replacing the current L-triple with an R-triple, and initiating a search for a right-accepting state.
  \begin{align*}
    \textrm{\color{gray} By \textsc{L-to-R}} \impliedby&\gtripleDelay{(q_0,r_0)}{q_0}{\relR}{r_0} 
  \end{align*}
  As $r_0$ is not final, the only option is to apply rule \textsc{R-Delay} to investigate all possible ways to transition out of $(q_0, r_0)$. Here, the only option is to move to $(q_1, r_1)$ via the event $a$.
  \begin{align*}
    \textrm{\color{gray} By \textsc{R-Delay}} \impliedby&\gtripleDelay{(q_0,r_0)}{q_1}{\relR}{r_1} 
  \end{align*}
  Since $r_1$ is final, we can circle back to an L-triple with \textsc{R-Final}.
  \begin{align*}
    \textrm{\color{gray} By \textsc{R-Final}} \impliedby&\tripleDelay{ (q_0, r_0) }{q_0}{\relL}{r_0} 
  \end{align*}
  Now, we are back in $(q_0 ,r_0)$ and importantly, the guard around the context is released.
  Hence, we can use the \textsc{L-Cycle} rule to conclude the proof.
  \begin{align*}
    \textrm{\color{gray} By \textsc{L-Cycle}} \impliedby&(q_0,r_0) \in \{(q_0,r_0)\} 
  \end{align*}
\end{example}

  \subsection{Additional Reasoning Rules}
  \label{sec:more-rules}

  Strictly by following the definition of our triples, the rules of parameterized coinduction cannot be applied to \texttt{R}-triples.
  In particular, with the current definition of $\fsimdelay$ it would be incorrect to use the context $H$ to prove a \texttt{R}-triple.
  Indeed, this would allow us to exploit any cycle
  to abandon an obligation to reach a right-accepting state. Even though using the context during a proof of an \texttt{R}-triple is unsound, extending it by accumulating pairs of states visited during the proof of an \texttt{R}-triple can be a convenient feature to avoid redundant proof steps. We demonstrate this observation with an example.

  \pagebreak
  \begin{example}
    Consider the following two B\"uchi automata:
    \begin{center}
      \begin{tikzpicture}[shorten >=1pt,node distance=2cm,on grid,auto,initial text=]
        \node[initial, state, accepting] (A) {$q_0$};
        \node[below=of A, yshift=10pt] (name) {$A_1:$};
        \node[state, right=of A, accepting] (B) {$q_1$};
        \node[state, right=of B] (C) {$q_2$};
        \node[state, below=of C, accepting] (D) {$q_3$};
        \path[->]
          (A) edge node{$a$} (B)
          (B) edge node{$b$} (C)
          (C) edge node{$c$} (D)
          (D) edge node{$d$} (B)
          (D) edge[bend left] node{$d$} (A);
      \end{tikzpicture}%
      \qquad\qquad
      \begin{tikzpicture}[shorten >=1pt,node distance=2cm,on grid,auto,initial text=]
        \node[initial, state] (A) {$r_0$};
        \node[below=of A, yshift=10pt] (name) {$A_2:$};
        \node[state, right=of A, accepting] (B) {$r_1$};
        \node[state, right=of B] (C) {$r_2$};
        \node[state, below=of C, accepting] (D) {$r_3$};
        \path[->]
          (A) edge node{$a$} (B)
          (B) edge node{$b$} (C)
          (C) edge node{$c$} (D)
          (D) edge node{$d$} (B)
          (D) edge[bend left] node{$d$} (A);
      \end{tikzpicture}
    \end{center}
    Both automata have exactly the same states and transitions. However, $A_1$ has its initial state marked as accepting but not $A_2$. This difference does not change the language recognized by $A_2$ and $\lang(A_1) \subseteq \lang(A_2)$ (in fact, we even have $\lang(A_1) = \lang(A_2)$). We prove that $\lang(A_1) \subseteq \lang(A_2)$ by delay simulation. \begin{align*}
      &\gtripleDelay{\emptyset}{q_0}{\relL}{r_0}\\
      \textrm{\color{gray}By \textsc{L-Invariant}}
      \impliedby&\gtripleDelay{(q_0, r_0)}{q_0}{\relL}{r_0}\\
      \textrm{\color{gray}By \textsc{L-to-R} and \textsc{R-Delay}}
      \impliedby&\gtripleDelay{(q_0, r_0)}{q_1}{\relR}{r_1}\\
      \textrm{\color{gray}By 2 times \textsc{R-Delay}}
      \impliedby&\gtripleDelay{(q_0, r_0)}{q_3}{\relR}{r_3}\\
      \textrm{\color{gray}By \textsc{R-Final}}
      \impliedby&\tripleDelay{(q_0, r_0)}{q_1}{\relL}{r_1} \wedge \tripleDelay{(q_0, r_0)}{q_0}{\relL}{r_0}
    \end{align*}
    At this point, the goal $\tripleDelay{(q_0, r_0)}{q_0}{\relL}{r_0}$ can be immediately discharged by $\textsc{L-Cycle}$.
    However, we can not yet conclude for $\tripleDelay{(q_0, r_0)}{q_1}{\relL}{r_1}$. We therefore have to extend our current invariant. \begin{align*}
      &\tripleDelay{(q_0, r_0)}{q_1}{\relL}{r_1}\\
      \textrm{\color{gray}By \textsc{L-Guard} and \textsc{L-Invariant}}
      \impliedby&\gtripleDelay{(q_0, r_0), (q_1, r_1)}{q_1}{\relL}{r_1}\\
      \textrm{\color{gray}By \textsc{L-to-R} and 2 times \textsc{R-Delay}}
      \impliedby&\gtripleDelay{(q_0, r_0), (q_1, r_1)}{q_3}{\relL}{r_3}\\
      \textrm{\color{gray}By \textsc{R-Final} and \textsc{L-Cycle}}
      \impliedby&(q_0, r_0) \in \{ (q_0, r_0), (q_1, r_1)\} \wedge (q_1, r_1) \in \{ (q_0, r_0), (q_1, r_1)\}
    \end{align*}
    Observe that the last steps of reasoning are duplicated! Indeed, we already encountered
    the pair of states $(q_1, r_1)$ earlier in the proof when we proved the \texttt{R}-triple $\gtripleDelay{ (q_0, r_0)}{q_1}{\relR}{r_1}$. Intuitively, at this point, we would have wanted to extend our context with $(q_1, r_1)$, thus allowing us to immediately discharge the two goals generated by the first application of \textsc{R-Final}.
    Unfortunately, the current rules do not allow to extend the context while establishing a \texttt{R}-triple.
  \end{example}

  Even though the rules of parameterized coinduction do not immediately allow us to accumulate pairs of states visited during the proof of a \texttt{R}-triples, by definition all such pairs are still guaranteed to reach a right-accepting state. Intuitively, nothing should prevent us from remembering this fact by accumulating pairs of states in the context.
  In the following, we prove that this intuition is indeed correct. More precisely, we prove that the following \textsc{R-Invariant} rule is sound: \begin{mathpar}
    \inferrule[R-Invariant]{(s_1, s_2) \in H'\\ H' \subseteq \mathcal{F}_1 \times \mathcal{S}_2\\\forall (s'_1, s'_2) \in H'. \ \gtripleDelay{H \cup H'}{s_1}{\relR}{s_2}}{\gtripleDelay{H}{s_1}{\relR}{s_2}}
  \end{mathpar}

  To establish the soundness of $\textsc{R-Invariant}$, we first observe that when $s_1 \in \mathcal{F}_1$, $\gtriple{H}{s_1}{\relL}{s_2}$ also implies $\gtriple{H}{s_1}{\relR}{s_2}$.

  \begin{lemma}
    \label{lem:rinvariant-aux}
    $s_1 \in \mathcal{F}_1 \implies \gtriple{H}{s_1}{\relL}{s_2} \implies \gtriple{H}{s_1}{\relR}{s_2}$
  \end{lemma}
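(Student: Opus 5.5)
Throughout, I read the two triples in the statement as the delay-simulation triples: $\gtriple{H}{s_1}{\relL}{s_2}$ means $(s_1,s_2) \in G_{\fsimdelayFL}(H)$, and $\gtriple{H}{s_1}{\relR}{s_2}$ means $(s_1,s_2) \in \fsimdelayR(H \cup G_{\fsimdelayFL}(H))$. Write $Z \triangleq H \cup G_{\fsimdelayFL}(H)$ for brevity. The proof is a direct unfolding, in the same spirit as \Cref{lem:delaysim_reach}.

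\emph{Step 1: unfold the parameterized fixed point.} Assume $s_1 \in \mathcal{F}_1$ and $(s_1,s_2) \in G_{\fsimdelayFL}(H)$. By definition, $G_F(H) = \nu X.\, F(X \cup H)$, so unfolding once gives
\[
  G_{\fsimdelayFL}(H) = \fsimdelayFL\bigl(G_{\fsimdelayFL}(H) \cup H\bigr) = \fsimdelayFL(Z).
\]
Hence $(s_1,s_2) \in \fsimdelayFL(Z)$.

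\emph{Step 2: case analysis on the two disjuncts of $\fsimdelayFL$.}
\begin{itemize}
  \item The \textsc{L-Step} disjunct requires $s_1 \notin \mathcal{F}_1$. This contradicts the hypothesis $s_1 \in \mathcal{F}_1$, so the case is impossible.
  \item The \textsc{L-to-R} disjunct yields $(s_1,s_2) \in \fsimdelayR(Z)$. This is exactly $\gtriple{H}{s_1}{\relR}{s_2}$ by definition of the \texttt{R}-triple.
\end{itemize}

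The only point needing care is Step 1. One must use that $G$ is itself a fixed point, $G_F(H) = F(G_F(H) \cup H)$. This follows from Tarski's theorem, given monotonicity of $\fsimdelayFL$. Monotonicity holds because $\fsimdelayR$ is monotone in its parameter: $X$ appears only positively under the least fixed point. In Rocq this is the standard \texttt{paco} unfold lemma, so I expect no real obstacle. The result is essentially a one-line case split once the definitions are unfolded.
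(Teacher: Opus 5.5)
Your proof is correct and follows the same approach as the paper. You unfold $G_{\fsimdelayFL}(H)$ once to get membership in $\fsimdelayFL(H \cup G_{\fsimdelayFL}(H))$, rule out the \textsc{L-Step} disjunct because it requires $s_1 \notin \mathcal{F}_1$, and note that the \textsc{L-to-R} disjunct is exactly the definition of the \texttt{R}-triple; your remark that monotonicity justifies the unfolding is a detail the paper leaves implicit.
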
\begin{proof}
    Suppose $s_1 \in \mathcal{F}_1$ and $\gtriple{H}{s_1}{\relR}{s_2}$.
    By unfolding the definition of \texttt{L}-triples, we have two cases.
    Either $(s_1, s_2) \in \fsimdelayR(H \cup G_{\fsimdelayFL}(H))$. This is exactly the definition of $\gtriple{H}{s_1}{\relR}{s_2}$.
    Otherwise, $s_1 \notin \mathcal{F}_1$ and for every successor $s'_1$ of $s_1$, there is a successor of $s'_2$ of $s_2$ with $\triple{H}{s_1}{\relL}{s_2}$. This case is contradictory with the assumption that $s_1 \in \mathcal{F}_1$.
  \end{proof}

  \begin{lemma}
    The rule $\textsc{R-Invariant}$ is sound.
  \end{lemma}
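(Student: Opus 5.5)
The plan is to reduce \textsc{R-Invariant} to the \textsc{Accumulate} principle of parameterized coinduction from \Cref{sec:paco}, together with monotonicity of $\fsimdelayR$. I read the third premise as quantifying over the pairs it binds, i.e.\ $\forall (s'_1, s'_2) \in H'. \ \gtripleDelay{H \cup H'}{s'_1}{\relR}{s'_2}$. Unfolding the definitions, this says $H' \subseteq \fsimdelayR(H \cup H' \cup G_{\fsimdelayFL}(H \cup H'))$. The goal is $(s_1, s_2) \in \fsimdelayR(H \cup G_{\fsimdelayFL}(H))$. Since $\fsimdelayR$ is monotone in its parameter (it is a least fixed point of a functor monotone in $X$), it suffices to prove
\[
H \cup H' \cup G_{\fsimdelayFL}(H \cup H') \subseteq H \cup G_{\fsimdelayFL}(H).
\]
The goal then follows from the premise instantiated at $(s_1, s_2) \in H'$.

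The first step is to show $H' \subseteq G_{\fsimdelayFL}(H)$. By \textsc{Accumulate}, it is enough to show $H' \subseteq G_{\fsimdelayFL}(H \cup H')$. Take $(s'_1, s'_2) \in H'$. The premise gives the \texttt{R}-triple $\gtripleDelay{H \cup H'}{s'_1}{\relR}{s'_2}$. Applying \textsc{L-to-R} (the \textsc{L-to-R} disjunct of $\fsimdelayFL$, followed by one \textsc{Step} of parameterized coinduction) turns it into $\gtripleDelay{H \cup H'}{s'_1}{\relL}{s'_2}$, which is exactly membership in $G_{\fsimdelayFL}(H \cup H')$. The side condition $H' \subseteq \mathcal{F}_1 \times \mathcal{S}_2$ is not needed in this direction. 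I expect it to matter only if one proceeds the other way, going from \texttt{L}-triples back to \texttt{R}-triples via \Cref{lem:rinvariant-aux}. I keep it available in case the mechanized definitions force that route.

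The second step, which I expect to be the main obstacle, is to show $G_{\fsimdelayFL}(H \cup H') \subseteq G_{\fsimdelayFL}(H)$ using $H' \subseteq G_{\fsimdelayFL}(H)$. This is a standard parameterized-coinduction fact, but it has to be proved directly. I will proceed by coinduction, using $G_{\fsimdelayFL}(H) = \nu X. \fsimdelayFL(X \cup H)$, with candidate postfixed point $R = G_{\fsimdelayFL}(H \cup H') \cup G_{\fsimdelayFL}(H)$. For an element of $G_{\fsimdelayFL}(H \cup H')$, unfolding places it in $\fsimdelayFL(G_{\fsimdelayFL}(H \cup H') \cup H \cup H')$. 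Since $H' \subseteq G_{\fsimdelayFL}(H)$, the parameter is contained in $R \cup H$, and monotonicity of $\fsimdelayFL$ (which itself relies on monotonicity of $\fsimdelayR$) puts the element in $\fsimdelayFL(R \cup H)$. Elements of $G_{\fsimdelayFL}(H)$ are handled symmetrically. Hence $R$ is a postfixed point, and $R \subseteq G_{\fsimdelayFL}(H)$.

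Combining the two steps gives the displayed inclusion, and monotonicity of $\fsimdelayR$ then yields $\gtripleDelay{H}{s_1}{\relR}{s_2}$.
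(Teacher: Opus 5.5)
Your proof is correct, but it takes a different route from the paper's.

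The paper's route is as follows. It uses the side condition first: from $(s_1,s_2) \in H' \subseteq \mathcal{F}_1 \times \mathcal{S}_2$ it gets $s_1 \in \mathcal{F}_1$. By \Cref{lem:rinvariant-aux} it then suffices to prove the \texttt{L}-triple $\gtripleDelay{H}{s_1}{\relL}{s_2}$, which follows from premise (iii) by \textsc{L-Invariant} with $H'$ and \textsc{L-to-R}. Concretely, the paper obtains $(s_1,s_2) \in G_{\fsimdelayFL}(H)$, which is exactly your first step. It then unfolds this membership and uses $s_1 \in \mathcal{F}_1$ to rule out the \textsc{L-Step} disjunct, recovering membership in $\fsimdelayR(H \cup G_{\fsimdelayFL}(H))$.

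Your route never leaves the \texttt{R}-triple. You shrink its parameter from $H \cup H' \cup G_{\fsimdelayFL}(H \cup H')$ to $H \cup G_{\fsimdelayFL}(H)$, using monotonicity of $\fsimdelayR$ together with the upgrade inclusion $G_{\fsimdelayFL}(H \cup H') \subseteq G_{\fsimdelayFL}(H)$. Your two steps can be merged. Once $H' \subseteq G_{\fsimdelayFL}(H \cup H')$, the set $G_{\fsimdelayFL}(H \cup H')$ is itself a postfixed point of $X \mapsto \fsimdelayFL(X \cup H)$. This is the argument underlying \textsc{Accumulate}, so a single coinduction gives $H' \cup G_{\fsimdelayFL}(H \cup H') \subseteq G_{\fsimdelayFL}(H)$.

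The trade-off is this. The paper's argument is shorter because it only chains rules that are already derived, but it genuinely depends on $H' \subseteq \mathcal{F}_1 \times \mathcal{S}_2$. Yours costs an extra coinductive lemma and a monotonicity argument. In exchange it shows that this side condition is superfluous: \textsc{R-Invariant} holds for arbitrary $H'$. That is semantically reasonable, since the context of an \texttt{R}-triple only becomes usable after \textsc{R-Final}, that is, after a right-accepting state has been reached.
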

  \begin{proof}
    Let (i) $H' \subseteq \mathcal{F}_1 \times \mathcal{S}_2$ and suppose (ii) that $(s_1, s_2) \in H'$. Further, assume that (iii) $\forall (s'_1, s'_2) \in H'. \ \gtripleDelay{H \cup H'}{s'_1}{\relR}{s'_2}$.
    From these assumptions, we have to derive $\gtripleDelay{H}{s_1}{\relR}{s_2}$.
    By (i) and (ii) we can deduce that $s_1 \in \mathcal{F}_1$. We can then use \Cref{lem:rinvariant-aux} to establish the following chain of implications: \begin{align*}
      &\gtripleDelay{H}{s_1}{\relR}{s_2}\\
      \textrm{\color{gray}By \Cref{lem:rinvariant-aux} and $s_1 \in \mathcal{F}_1$}
      \impliedby&\gtripleDelay{H}{s_1}{\relL}{s_2}\\
      \textrm{\color{gray}By \textsc{L-Invariant} using $H'$}
      \impliedby&\forall (s'_1, s'_2) \in H'. \ \gtripleDelay{H \cup H'}{s'_1}{\relL}{s'_2}\\
      \textrm{\color{gray}By \textsc{L-to-R}}
      \impliedby&\forall (s'_1, s'_2) \in H'. \ \gtripleDelay{H \cup H'}{s'_1}{\relR}{s'_2}
    \end{align*}
    The last statement is exactly (iii), which concludes the proof.
  \end{proof}

  Using the new rule \textsc{R-Invariant}, we can now rework our previous example by accumulating $(q_1, r_1)$ into the context the first time it is encountered. The corresponding proof is summarized as follows: \begin{align*}
    &\gtripleDelay{\emptyset}{q_0}{\relL}{r_0}\\
    \textrm{\color{gray}By \textsc{L-Invariant}}
    \impliedby&\gtripleDelay{(q_0, r_0)}{q_0}{\relL}{r_0}\\
    \textrm{\color{gray}By \textsc{L-to-R} and \textsc{R-Delay}}
    \impliedby&\gtripleDelay{(q_0, r_0)}{q_1}{\relR}{r_1}\\
    \textrm{\color{gray}\textbf{By \textsc{R-Invariant}} and $q_1 \in \mathcal{F}_1$}
    \impliedby&\gtripleDelay{(q_0, r_0), (q_1, r_1)}{q_1}{\relR}{r_1}\\
    \textrm{\color{gray}By 2 times \textsc{R-Delay}}
    \impliedby&\gtripleDelay{(q_0, r_0), (q_1, r_1)}{q_3}{\relR}{r_3}\\
    \textrm{\color{gray}By \textsc{R-Final}}
    \impliedby&\tripleDelay{(q_0, r_0), (q_1, r_1)}{q_1}{\relL}{r_1} \wedge \tripleDelay{(q_0, r_0), (q_1, r_1)}{q_0}{\relL}{r_0}\\
    \textrm{\color{gray}By \textsc{L-Cycle}}
    \impliedby&(q_0, r_0) \in \{ (q_0, r_0), (q_1, r_1) \} \wedge (q_1, r_1) \in \{ (q_0, r_0), (q_1, r_1) \}
  \end{align*}

  \subsection{Right-Biased Delay Simulation}

  In the definition of delay simulation, it is very important that the 
  disjunct \textsc{L-Step} is \emph{not} in the scope of the inner least fixed point.
  Otherwise, the relation does not provide a sound technique for language inclusion. Consider the following (incorrect) reformulation of delay simulation where the inner least fixed point operator has been pulled back to be at the same level as the outer greatest fixed point.{\color{red}
  \begin{align*}
    \texttt{wrong} &\triangleq \nu X . \mu Y .\\
      \textsc{(Final)} & \quad \{ \ (s_1, s_2) \mid s_2 \in \mathcal{F}_2 \wedge \forall e. \forall \iostep{s_1}{e}{s'_1}. \exists \iostep{s_2}{e}{s'_2}. \ (s'_1, s'_2) \in X \ \} \ \cup\\
      \textsc{(Delay)} & \quad \{ \ (s_1, s_2) \mid \forall e. \forall \iostep{s_1}{e}{s'_1}. \exists \iostep{s_2}{e}{s'_2}. \ (s'_1, s'_2) \in Y \ \} \ \cup\\
      \textsc{(Step)} & \quad \{ \ (s_1, s_2) \mid s_1 \notin \mathcal{F}_1 \wedge \forall e. \forall \iostep{s_1}{e}{s'_1}. \exists \iostep{s_2}{e}{s'_2}. (s_1, s_2) \in X \ \}
  \end{align*}}

  This definition essentially merges $\fsimdelayL$ and $\fsimdelayR$ into a single coinductive-inductive predicate. Unfortunately, it is not too difficult to see that this attempt at "simplifying" $\fsimdelay$ gives a notion of simulation that is unsound for language inclusion. As a counterexample, consider the following two automata: \begin{center}
    \begin{tikzpicture}[shorten >=1pt,node distance=2cm,on grid,auto,initial text=]
      \node[state,initial,accepting] (q_0) {$q_0$};
      \node[state] (q_1) [right=of q_0] {$q_1$};
      \node[state,initial, right=of q_1, xshift=.5cm] (r_0) {$r_0$};
      \path[->]
        (q_0) edge[bend left] node {a} (q_1)
        (q_1) edge[bend left] node {a} (q_0);
      \path[->]
        (r_0) edge[loop right] node {a} (r_0);
    \end{tikzpicture}
  \end{center}
  Clearly, $\lang(q_0) = \{ a^\omega\}$ and it is not included in $\lang(r_0) = \emptyset$.
  However, we can prove that $(q_0, r_0) \in \texttt{wrong}$ using $\{ (q_0, r_0), (q_1, r_0) \}$ as a coinduction hypothesis.
  Indeed, from $(q_0, r_0)$ it suffices to use \textsc{Delay} once to reach $(q_1, r_0)$.
  From there, since $q_1$ is not final, we can use \textsc{Step} to cycle back to $(q_0, r_0)$ and conclude, even though we never used \textsc{Final} to show that an accepting state can be reached in the right-hand automaton.
  
  To fix this wrong definition, we can either 
  remove the \textsc{Delay} disjunct,
  or the \textsc{Step} disjunct.
  Removing \textsc{Delay} gives back direct simulation.
  However, if we remove \textsc{Step} and keep \textsc{Delay}, we obtain the following \emph{right-biased} notion of simulation (we note $\texttt{fsim}_\texttt{rb}$).

  \begin{definition}[Right-biased Simulation]
    \begin{align*}
      \texttt{fsim}_\texttt{rb} \triangleq \nu X. \mu Y.\\
      \textsc{\color{gray}(Final)} & \quad \{ \ (s_1, s_2) \mid s_2 \in \mathcal{F}_2 \wedge \forall e. \forall \iostep{s_1}{e}{s'_1}. \exists \iostep{s_2}{e}{s'_2}. \ (s'_1, s'_2) \in X \ \} \ \cup\\
      \textsc{\color{gray}(Delay)} & \quad \{ \ (s_1, s_2) \mid \forall e. \forall \iostep{s_1}{e}{s'_1}. \exists \iostep{s_2}{e}{s'_2}. \ (s'_1, s'_2) \in Y \ \}
    \end{align*}
  \end{definition}

  We note that by removing the disjunct {\color{gray}(\textsc{Step})}, we lose the ability to exploit the fairness assumption of the left-hand automaton (hence the name \emph{right-biased}).
  Indeed, {\color{gray}(\textsc{Step})} allowed us to use the coinduction hypothesis after taking a computation step from a non-accepting left-state. This enables to use
  cyclic reasoning to ignore non-accepting loops in the left-hand automaton.
  Without {\color{gray}(\textsc{Step})}, we have no choice but to visit a right-accepting state in order to eventually get access to the coinduction hypothesis using {\color{gray}(\textsc{Final})}.

  The relation $\texttt{fsim}_\texttt{rb}$ gives a sound technique for language inclusion.
  However, since it ignores the fairness condition of the left-hand automaton, it is more accurately phrased as a technique to guarantee that any \emph{trace} of the left-hand automaton is in the language of the right-hand one. \begin{theorem}
    $(s_1, s_2) \in \texttt{fsim}_\texttt{rb} \implies \mathit{Traces}(s_1) \subseteq \lang(s_2)$
  \end{theorem}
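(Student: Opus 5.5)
We prove the statement by coinduction on $\lang(s_2)$, following the pattern of the soundness proofs of \Cref{thm:fsimdirect-sound} and \Cref{thm:fsimdelay-sound}. Concretely, we show that the relation
\[
  R \triangleq \{ (s_2, \tau) \mid \exists s_1.\ (s_1, s_2) \in \texttt{fsim}_\texttt{rb} \wedge \tau \in \mathit{Traces}(s_1) \}
\]
satisfies $R \subseteq \lang_\mathit{F}(R)$, where $\lang_\mathit{F}(L) \triangleq \mu X.\ (\ldots)$ is the functor underlying $\lang$. By \Cref{coind}, this gives $R \subseteq \lang$, which is the claim. As in the other soundness proofs, parameterized coinduction could be used instead to avoid naming $R$ explicitly.

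Fix $(s_1, s_2) \in \texttt{fsim}_\texttt{rb}$ and $\tau \in \mathit{Traces}(s_1)$. Unfolding the greatest fixed point once, $(s_1, s_2) \in \mu Y.\ \Phi(\texttt{fsim}_\texttt{rb}, Y)$, where $\Phi$ is the body of the definition. We then proceed by induction on this least fixed point, with the statement
\[
  P(s_1, s_2) \triangleq \forall \tau \in \mathit{Traces}(s_1).\ (s_2, \tau) \in \lang_\mathit{F}(R).
\]
That is, we show $\{ (s_1,s_2) \mid P(s_1,s_2) \}$ is a prefixed point of $\Phi(\texttt{fsim}_\texttt{rb}, \cdot)$. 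In both cases we first unfold $\tau \in \mathit{Traces}(s_1)$ to obtain $\tau = e \cdot \tau'$ with $\iostep{s_1}{e}{s'_1}$ and $\tau' \in \mathit{Traces}(s'_1)$.

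\textbf{Case (Final).} Here $s_2 \in \mathcal{F}_2$, and the simulation property yields $\iostep{s_2}{e}{s'_2}$ with $(s'_1, s'_2) \in \texttt{fsim}_\texttt{rb}$. Hence $(s'_2, \tau') \in R$. Unfolding the least fixed point inside $\lang_\mathit{F}(R)$ once, we use the accepting disjunct to conclude $(s_2, e\cdot\tau') \in \lang_\mathit{F}(R)$.

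\textbf{Case (Delay).} Here we obtain $\iostep{s_2}{e}{s'_2}$ with $P(s'_1, s'_2)$ as induction hypothesis. This gives $(s'_2, \tau') \in \lang_\mathit{F}(R)$, that is, membership in the inner least fixed point $X$. The first disjunct then yields $(s_2, e \cdot \tau') \in \lang_\mathit{F}(R)$.

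The main subtlety is bookkeeping the fixed points. The induction must be on the inner $\mu$ of $\texttt{fsim}_\texttt{rb}$, and the induction hypothesis must be stated for \emph{all} traces of $s'_1$. Otherwise the trace $\tau$ is fixed before the induction and the hypothesis is too weak. The fixed trace $\tau$ must also be threaded through correctly. Finally, one must check that the two least fixed points line up. The inner $\mu$ of $\texttt{fsim}_\texttt{rb}$ feeds exactly into the inner $\mu$ of $\lang$, while the outer $\nu$ is only accessed through (Final), which coincides with the accepting disjunct of $\lang$. This alignment is what makes the proof go through without needing any fairness of the left-hand automaton: $\mathit{Traces}(s_1)$ is unfolded only one step at a time. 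No earlier lemma is needed beyond \Cref{coind}.
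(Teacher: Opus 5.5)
Your proof is correct, but it is organized differently from the paper's sketch. The paper argues at the level of executions. It first notes $\texttt{fsim}_\texttt{rb} \subseteq \texttt{sim}$ to obtain trace inclusion. It then argues that the (\textsc{Final}) disjunct forces the simulating right-hand execution to visit $\mathcal{F}_2$ infinitely often. You never pass through $\texttt{sim}$ or build a run. Instead you align the $\nu\mu$ structure of $\texttt{fsim}_\texttt{rb}$ directly with the $\nu\mu$ structure of $\lang$, using one coinduction on $\lang(s_2)$ and an induction on the inner least fixed point of $\texttt{fsim}_\texttt{rb}$, correctly generalized over the trace. This mirrors the skeleton the paper uses for direct and delay simulation. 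The difference is that your inner induction is on the simulation's $\mu$ rather than on the inductive part of $\lang(s_1)$, which is exactly why no fairness of the left-hand side is needed and the conclusion holds for all traces of $s_1$. Your route is local and directly mechanizable: it consumes the trace one step at a time, so it needs no choice principle to assemble an infinite right-hand run. It also fuses into a single argument two steps the paper leaves implicit. The first is showing, by induction on the inner $\mu$, that $\mathcal{F}_2$ is hit within finitely many steps. The second is converting ``infinitely often'' into membership in the $\nu\mu$-defined $\lang$. The paper's sketch, in exchange, makes the intuition more transparent by separating the safety part (trace inclusion via $\texttt{sim}$) from the liveness part (acceptance via (\textsc{Final})).
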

  \begin{proof}(Sketch)
    Clearly, $\texttt{fsim}_\texttt{rb}$ is contained in $\texttt{sim}$ and therefore, it guarantees trace inclusion.
    Further, the disjunct {\color{gray}(\textsc{Final})} enforces that any execution of the left-hand automaton is simulated by an execution of the right-hand automaton that steps through an $\mathcal{F}_2$-state infinitely many times.
    This guarantees that traces of $s_1$ are in the language of $s_2$.
  \end{proof}

  Interestingly, we observe that even without the ability to exploit fairness assumptions, $\texttt{fsim}_\texttt{rb}$ can still be used as a general method for interactive proofs of arbitrary specifications expressed as a B\"uchi automaton. In particular, this includes any specification expressed in Linear Temporal Logic \cite{Baier2008PrinciplesOM}.

  As for the previous relations presented so far, $\texttt{fsim}_\texttt{rb}$ can be presented in the form of a deductive system. The two core rules are the following \textsc{Final} and \textsc{Delay} rules:
  \begin{mathpar}
    \inferrule[Final]{s_2 \in \mathcal{F}_2\\\forall e. \forall \iostep{s_1}{e}{s'_1}.\exists \iostep{s_2}{e}{s'_2}. \ \triplerb{H}{s'_1}{\preccurlyeq}{s'_2}}{\gtriplerb{H}{s_1}{\preccurlyeq}{s_2}}\qquad
    \inferrule[Delay]{\forall e. \forall \iostep{s_1}{e}{s'_1}.\exists \iostep{s_2}{e}{s'_2}. \ \gtriplerb{H}{s'_1}{\preccurlyeq}{s'_2}}{\gtriplerb{H}{s_1}{\preccurlyeq}{s_2}}
  \end{mathpar}
  
  We note that while \textsc{Final} releases the guard around the context, \textsc{Delay} does not! Naturally, as in \Cref{fig:direct-rules} and \Cref{fig:rules_esim}, additional rules $\textsc{Cycle}$, $\textsc{Guard}$, and \textsc{Invariant} can also be derived.

  \tikzset{shorten >=1pt,node distance=2cm,on grid,auto,initial text=}

  \newcommand{\pstate}[1]{\node[state, rectangle, rounded corners, #1]}
  \newcommand{\ipstate}[1]{\node[initial, state, rectangle, rounded corners, #1]}
  \newcommand{\fpstate}[1]{\node[state, rectangle, rounded corners, accepting, #1]}
  \newcommand{\ifpstate}[1]{\node[initial, state, rectangle, rounded corners, accepting, #1]}

  \begin{example}
    As an example, consider the following program (top left), its representation as an infinite-state transition system (on the right), and its specification (bottom left):
    \begin{center}
      \begin{minipage}{.4\textwidth}
      \begin{tabular}{l}
        \underline{Program:}\\
        \\
        (1)\quad$x \gets 0$\\
        (2)\quad\textbf{loop}:\\
        (3)\quad\quad$x \gets n \in \mathbb{N}$\\
        (4)\quad\quad$\textbf{while}(x > 0): x \gets x - 1$\\
        (5)\quad\quad$\textbf{print}("done")$
      \end{tabular}%
      \vskip1\baselineskip
      \begin{center}
        \underline{Specification:}
        \vskip.5\baselineskip
        \begin{tikzpicture}
          \node[initial, state] (A) {$q_0$};
          \node[state, right=of A, accepting] (B) {$q_1$};
          \path[->]
            (A) edge[bend left] node{$\texttt{done}$} (B)
            (B) edge[bend left] node{$\mathcal{E}$} (A)
            (A) edge[loop above] node{$\mathcal{E}$} (A)
            (B) edge[loop above] node{$\texttt{done}$} (B);
        \end{tikzpicture}
      \end{center}
      \end{minipage}%
      \begin{minipage}{.6\textwidth}
        \begin{tikzpicture}
          \ipstate{} (A) {$1 \mid x = 0$};
          \pstate{below=of A, yshift=-25pt} (A') {$2 \mid x = 0$};
          \pstate{below right=of A} (B) {$3 \mid x = 0$};
          \pstate{right=of B} (D) {$4 \mid x = 1$};
          \pstate{above=of D, yshift=-15pt} (C) {$4 \mid x = 0$};
          \pstate{below=of D, yshift=15pt} (E) {$4 \mid x = 2$};
          \pstate{right=of D, xshift=15pt} (F) {$5 \mid x = 0$};
          \pstate{below=of E, yshift=15pt, dashed} (G) {$4 \mid \ \ \ldots \ \ $};
          \pstate{right=of E, dashed, xshift=15pt, yshift=-5pt} (H) {$2 \mid x = 0 $};
          \draw[->]
            (A) edge node{$\epsilon$} (A')
            (A') edge node{$\epsilon$} (B)
            (B) edge[bend left] node{$\epsilon$} (C)
            (B) edge node{$\epsilon$} (D)
            (B) edge[bend right] node{$\epsilon$} (E)
            (C) edge[dashed, bend left] (F)
            (D) edge[dashed] (F)
            (E) edge[dashed] (F)
            (B) edge[dashed, bend right] (G)
            (G) edge[dashed] (F)
            (F) edge node{$\texttt{done}$} (H);
        \end{tikzpicture}
      \end{minipage}
    \end{center}
    The program operates a single variable $x \in \mathbb{N}$, initialized to be $0$.
    Then, it enters an infinite loop. At each iteration of this outer loop, a new value $n \in \mathbb{N}$ is determined. The program then counts down to $0$ from $n$ and prints the message "done".
    This example could model, for example, a reactive system continuously receiving user inputs ($x$), and processing it before providing an answer.
    The specification we wish to prove is that the program is \textit{\enquote{done}} infinitely many times. In LTL notation, we want to prove $\LTLsquare\LTLfinally\texttt{done}$. We give a proof using the rules of $\texttt{fsim}_\texttt{rb}$. \begin{align*}
      &\gtriplerb{\emptyset}{(1, 0)}{\preccurlyeq}{q_0}\\
      \textrm{\color{gray}By \textsc{Delay} (2 times)}
      \impliedby&\gtriplerb{\emptyset}{(3, 0)}{\preccurlyeq}{q_0}
    \end{align*}
    We use the invariant $I = \{ \ ((3, 0), q_0 ) \ \}$.
    \begin{align*}
      \textrm{\color{gray}By \textsc{Invariant}}
      \impliedby&\gtriplerb{$I$}{(3, 0)}{\preccurlyeq}{q_0}\\
      \textrm{\color{gray}By \textsc{Delay} and $x \gets n \in \mathbb{N}$}
      \impliedby&\forall n \in \mathbb{N}. \ \gtriplerb{$I$}{(4, n)}{\preccurlyeq}{q_0}
    \end{align*}
    From there, the proof goes by induction on $n \in \mathbb{N}$. The case of $n = 0$ is simple: \begin{align*}
      &\gtriplerb{I}{(4, 0)}{\preccurlyeq}{q_0}\\
      \textrm{\color{gray}By \textsc{Delay} and since $\neg (0 > 0)$}
      \impliedby&\gtriplerb{I}{(5, 0)}{\preccurlyeq}{q_0}\\
      \textrm{\color{gray}By \textsc{Delay} and by printing \textit{\enquote{done}}}
      \impliedby&\gtriplerb{I}{(2, 0)}{\preccurlyeq}{q_1}\\
      \textrm{\color{gray}By \textsc{Final} (since $q_1$ is final)}
      \impliedby&\triplerb{I}{(3, 0)}{\preccurlyeq}{q_0}\\
      \textrm{\color{gray}By \textsc{Cycle}}
      \impliedby&((3, 0), q_0) \in I
    \end{align*}
    Finally, we cover the inductive case.
    By induction, we get to assume $\gtriplerb{I}{(4, n)}{\preccurlyeq}{q_0}$, and we have to show $\gtriplerb{I}{(4, n + 1)}{\preccurlyeq}{q_0}$.
    \begin{align*}
      &\gtriplerb{I}{(4, n + 1)}{\preccurlyeq}{q_0}\\
      \textrm{\color{gray}By \textsc{Delay}, $n + 1 > 0$ and $x \gets x - 1$}
      \impliedby&\gtriplerb{I}{(4, n)}{\preccurlyeq}{q_0}
    \end{align*}
    We can now conclude using our induction hypothesis.
  \end{example}

  \section{Simulations with Repeated Delay}

  \subsection{The Problem of Spurious Left-Accepting States}

  In the previous section, we observed that for transition systems without fairness assumptions, a very simple notion of right-biased simulation is sufficient to prove liveness properties encoded as B\"uchi automata.
  For more challenging examples, we might however need to exploit fairness assumptions in order to prove that a liveness objective is accomplished. Direct simulation and delay simulation both have the ability to exploit fairness assumptions, but in a very restricted way.
  The main limitation of delay simulation is that \emph{as soon} as an accepting state is visited in the left, it forces to visit an accepting state in the right at a later point. In particular, this means that a left automaton with unnecessarily many accepting states will be more difficult to deal with. As an extreme example of this phenomenon, let us consider the following example:

  \begin{example}
    Consider the following two automata.
  \begin{center}
      \begin{tikzpicture}[shorten >=1pt,node distance=2cm,on grid,auto,initial text=]
        \node[state,initial,accepting]  (q_0)                      {$q_0$};
        \node[state]          (q_1) [right=of q_0] {$q_1$};
        \path[->] (q_0) edge              node        {a} (q_1)
                  (q_1) edge [loop right] node        {a} ();
      \end{tikzpicture}
      $\quad \quad$
      \begin{tikzpicture}[shorten >=1pt,node distance=2cm,on grid,auto,initial text=]
              \node[state,initial] (q_0) {$r_0$};
              \path[->] (q_0) edge [loop right] node {a} ();
      \end{tikzpicture}
    \end{center}
    Both have an empty language ($\lang(q_0) = \lang(r_0) = \emptyset$),
    and in principle, we would like to be able to prove that $\gtripleDelay{\emptyset}{q_0}{\relL}{r_0}$.
    Unfortunately, since $q_0$ is accepting, the only rule that can be applied initially is \textsc{L-to-R} and we have to prove $\gtripleDelay{\emptyset}{q_0}{\relR}{r_0}$.
    This of course not possible as the right automaton does not have any accepting state.
  \end{example}

  This example is somewhat artificial. Indeed, since the left automaton does not have any word anyway, the accepting state $q_0$ is spurious and  can be removed.
  Without it, we would have $(q_0, r_0) \in \fsimdelay$, as expected.
  Unfortunately, this bad pattern can happen in practice 
  with accepting states that cannot be removed because they belong to a cycle.

  \begin{example}
    \label{ex:ddelay}
    Consider the following two automata:
  \begin{center}
    \begin{tikzpicture}[shorten >=1pt,node distance=2cm,on grid,auto,initial text=]
      \node[state,initial]
        (q_0) {$q_0$};
      \node[state,accepting] (q_1) [below right=of q_0] {$q_1$};
      \node[state] (q_2) [right=of q_0] {$q_2$};
      \path[->]
        (q_0) edge [left] node {\texttt{schedule}} (q_1)
        (q_1) edge [right] node {\texttt{init}} (q_2)
        (q_2) edge [above] node {\texttt{done}} (q_0)
        (q_2) edge [loop right] node {\texttt{work}} ();
    \end{tikzpicture}
    $\quad \quad$
    \begin{tikzpicture}[shorten >=1pt,node distance=2cm,on grid,auto,initial text=]
      \node[state,initial]    (q_0) {$r_0$};
      \node[state,accepting]  (q_1) [right=of q_0] {$r_1$};
      \path[->]
        (q_0) edge node {\texttt{done}} (q_1)
        (q_0) edge [loop above] node {$\mathcal{E} \setminus \{ \texttt{done} \}$} ()
        (q_1) edge [loop above] node {$\mathcal{E}$} ();
    \end{tikzpicture}
  \end{center}

  The left automaton could model a scheduler which controls the execution of a program.
  The accepting state $q_1$ models two assumptions: the execution of the program is scheduled infinitely many times, and once the program execution is scheduled, it always terminates and hands the control back to the scheduler. With this interpretation in mind, the right automaton specifies that the program should be executed until completion at least once (i.e., the transition \texttt{done} needs to be taken).
  We note that here, the fact that $q_1$ is accepting is crucial.

  It is easy to see that the left automaton is language included in the right one. Unfortunately, this fact cannot be established by delay simulation. Indeed, suppose we wish to prove $\gtripleDelay{\emptyset}{q_0}{\relL}{r_0}$. Since $q_0 \notin \mathcal{F}_1$, the best we can do is to use $\textsc{L-Step}$
  to go to $\gtripleDelay{\emptyset}{q_1}{\relL}{r_0}$. Now, since $q_1 \in \mathcal{F}_1$, we have no other choice than using $\textsc{L-to-R}$ and it remains to prove $\gtripleDelay{\emptyset}{q_1}{\relR}{r_0}$.
  From there, we have to show that no matter which path out of $q_1$ is taken, we are guaranteed to reach a right-accepting state from $r_0$. Unfortunately, this is not the case!
  Indeed, we have to first go to $\gtripleDelay{\emptyset}{q_2}{\relR}{r_0}$ by \textsc{R-Delay}, but since there is a self-loop $\iostep{q_2}{\texttt{work}}{q_2}$, the right-hand automaton is stuck in the non-accepting state $r_0$.
  \end{example}

  Even though the previous example cannot be handled by $\fsimdelay$ because $(q_2, r_0) \notin \fsimdelayR$, we observe that $(q_2, r_0) \in \fsimdelayL$! Further, since the B\"uchi acceptance condition requires executions to visit infinitely many times an accepting state, the first $n \in \mathbb{N}$ visits to a left-accepting state can always be ignored.
  Said otherwise, for this specific example, it would be correct to replace the left-hand automaton with the following one, where the two first steps have been explicitly unrolled, and the first visit to an accepting state is ignored: \begin{center}
    \begin{tikzpicture}[shorten >=1pt,node distance=2cm,on grid,auto,initial text=]
      \node[state,initial] (qU_0) {$q'_0$};
      \node[state] (qU_1) [right=of qU_0, xshift=15pt] {$q'_1$};
      \node[state] (q_2) [right=of qU_1] {$q_2$};
      \node[state] (q_0) [below left=of q_2] {$q_0$};
      \node[state,accepting] (q_1) [below right=of q_2] {$q_1$};
      \path[->]
	      (qU_0) edge node {\texttt{schedule}} (qU_1)
	      (qU_1) edge node {\texttt{init}} (q_2)
	      (q_0) edge [above] node {\texttt{schedule}} (q_1)
	      (q_1) edge [right] node {\texttt{init}} (q_2)
	      (q_2) edge [above] node {\texttt{done}} (q_0)
	      (q_2) edge [loop right] node {\texttt{work}} ();
    \end{tikzpicture}
  \end{center}

  Clearly, the unrolled version is language equivalent to the initial one. However,
  we can now prove $\gtriple{\emptyset}{q'_0}{\relL}{r_0}$. It suffices to use two times $\textsc{L-Step}$ to move to $\gtriple{\emptyset}{q_2}{\relL}{r_0}$.
  Since we have not encountered any left-accepting state on the way, we are still focusing on a $\texttt{L}$-triple. From there, it is not too difficult to conclude by using $\{ (q_2, r_0), (q_0, r_1), (q_1, r_1), (q_2, r_1) \}$ as a coinduction hypothesis.
  Instead of having to explicitly modify the left transition system, the purpose of this section is to develop several new notions of delay simulation that have built-in support to emulate this type of reasoning.

  \subsection{Double Delay Simulation}

  In the previous example, we encountered the problem that delay simulation tracks \emph{every} left-accepting state, and each time one is encountered, it forces us to prove that a right-accepting state can be later reached.
  Instead, we would like to be able to temporarily ignore left-accepting state, and \emph{choose} when to start matching them with right-accepting states.
  So long as we only allow to ignore left-accepting states for a bounded number of computation steps, this still gives a sound proof technique for language inclusion.

  A first idea to achieve this intuition is to prefix the greatest fixed point underlying the definition of $\fsimdelay$ with an extra least fixed point allowing to skip the first $n$ left-accepting states. We call the corresponding relation \emph{double delay simulation} and define it as follows:

  \begin{definition}[Double Delay Simulation]
    \begin{align*}
      \delay(X) &\triangleq \mu Y .\\
      \textsc{\color{gray}(Wait)} & \quad \{ \ (s_1, s_2) \mid \forall e. \forall \iostep{s_1}{e}{s'_1}. \exists \iostep{s_2}{e}{s'_2}. \ (s'_1, s'_2) \in Y \ \} \ \cup\\
      \textsc{\color{gray}(Commit)} & \quad \{ \ (s_1, s_2) \mid (s_1, s_2) \in X \ \}\\
      \ddelaysim &\triangleq \delay(\fsimdelayL)
    \end{align*}
  \end{definition}

  Concretely, to prove that $(s_1, s_2) \in \ddelaysim$ we can either decide to \emph{commit} to track left-accepting state right away by proving $(s_1, s_2) \in \fsimdelay$.
  Alternatively, we can also decide to \emph{wait} for one computation step, and instead show that every successor $s'_1$ of $s_1$ can be matched with a successor $s'_2$ of $s_2$ such that $(s'_1, s'_2)$ is again in $\ddelaysim$.
  Since $\texttt{delay}$ is an inductive predicate, we can only wait for finitely many successive computation steps before having to switch to $\fsimdelay$, which guarantees
  soundness for language inclusion.

  \begin{theorem}[Soundness]
    $(s_1, s_2) \in \ddelaysim \implies \lang(s_1) \subseteq \lang(s_2)$
  \end{theorem}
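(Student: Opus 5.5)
The proof goes by induction on the least fixed point defining $\delay$. Since $\ddelaysim = \delay(\fsimdelayL)$ is an inductive predicate, it suffices to show that the set $P \triangleq \{ (s_1, s_2) \mid \lang(s_1) \subseteq \lang(s_2) \}$ is closed under the two disjuncts of $\delay$ with $X = \fsimdelayL$. Equivalently, $P$ must be a prefixed point of the functor underlying $\delay(\fsimdelayL)$.

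The \textsc{Commit} case is immediate. If $(s_1, s_2) \in \fsimdelayL$, then \Cref{thm:fsimdelay-sound} directly gives $\lang(s_1) \subseteq \lang(s_2)$.

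The \textsc{Wait} case needs a small lemma about the language predicate. Suppose every transition $\iostep{s_1}{e}{s'_1}$ is matched by some $\iostep{s_2}{e}{s'_2}$ with $\lang(s'_1) \subseteq \lang(s'_2)$ (the induction hypothesis). Let $e \cdot \tau \in \lang(s_1)$. We first need an inversion principle: $e \cdot \tau \in \lang(s_1)$ implies there is a transition $\iostep{s_1}{e}{s'_1}$ with $\tau \in \lang(s'_1)$. To see this, unfold the $\nu L.\mu X$ definition once. In the inductive disjunct we get $(s'_1, \tau) \in X$ with $X = \lang$ at the fixed point. In the accepting disjunct we get $(s'_1, \tau) \in L = \lang$. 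Either way $\tau \in \lang(s'_1)$.

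We then need the converse closure property: if $\iostep{s_2}{e}{s'_2}$ and $\tau \in \lang(s'_2)$, then $e \cdot \tau \in \lang(s_2)$. This follows by folding $\lang$ once using the non-accepting (inductive) disjunct, which is allowed because the inner $\mu$ is computed relative to $L = \lang$ itself. Combining the two properties: from $e \cdot \tau \in \lang(s_1)$ we obtain $s'_1$ with $\tau \in \lang(s'_1)$. The matching $s'_2$ then satisfies $\tau \in \lang(s'_2)$, hence $e \cdot \tau \in \lang(s_2)$.

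The only delicate point is making these one-step unfolding and folding properties of the mixed fixed point $\nu L.\mu X$ precise. They are the standard observations that $\lang = \mu X. F(\lang, X)$, which holds by the fixed point equation, and that $F(\lang, \lang) \subseteq \lang$. The latter follows from monotonicity together with $\lang = \mu X.F(\lang,X)$ being a fixed point of $X \mapsto F(\lang,X)$. Once these are in place, the argument is a plain induction with no coinduction beyond what \Cref{thm:fsimdelay-sound} already provides.
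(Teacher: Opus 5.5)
Your proposal is correct and follows the same route as the paper: induction on the least fixed point $\delay$, with the \textsc{Commit} case discharged by \Cref{thm:fsimdelay-sound}. For the \textsc{Wait} case, which the paper leaves implicit, you correctly supply the one-step inversion and folding of $\lang$ obtained from its fixed-point equation.
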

  \begin{proof}
    By induction on the definition of $\texttt{wait}$, exploiting the soundness of $\fsimdelayL$ for the base case.
  \end{proof}

  We observe that $\ddelaysim$ is strictly weaker than $\fsimdelay$. In particular, it is weak enough to cover the previous example.

  \begin{theorem}
    $\ddelaysim$ is strictly weaker than $\fsimdelay$.
  \end{theorem}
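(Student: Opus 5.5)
The statement has two halves. First, $\ddelaysim$ contains $\fsimdelayL$. Second, the containment is strict: some pair is in $\ddelaysim$ but not in $\fsimdelayL$. I will prove them separately.

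\textbf{Inclusion.} Take $(s_1, s_2) \in \fsimdelayL$. Unfolding the least fixed point defining $\delay(\fsimdelayL)$ once and choosing the \textsc{(Commit)} disjunct gives $(s_1, s_2) \in \delay(\fsimdelayL) = \ddelaysim$ directly. In lattice terms, $X \subseteq \delay(X)$ holds for every $X$, because $\delay(X)$ is a fixed point of a functor whose value contains $X$. No coinduction is needed for this half.

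\textbf{Strictness.} I will use the two automata of \Cref{ex:ddelay} and show $(q_0, r_0) \in \ddelaysim$ but $(q_0, r_0) \notin \fsimdelayL$.

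For membership, apply \textsc{(Wait)} twice. The only path is $q_0 \xrightarrow{\texttt{schedule}} q_1 \xrightarrow{\texttt{init}} q_2$, which is matched by the self-loop of $r_0$ on $\mathcal{E} \setminus \{\texttt{done}\}$. This reduces the goal to $(q_2, r_0) \in \ddelaysim$, and \textsc{(Commit)} reduces it further to $(q_2, r_0) \in \fsimdelayL$. For that, I use \Cref{coind} with $R = \{ (q_2, r_0), (q_0, r_1), (q_1, r_1), (q_2, r_1) \}$ and check that $R$ is a postfixed point of $\fsimdelayFL$:
\begin{itemize}
\item $(q_2, r_0)$ satisfies \textsc{L-Step}, since $q_2 \notin \mathcal{F}_1$. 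The \texttt{work} step is matched by the self-loop on $r_0$, landing in $(q_2, r_0)$. The \texttt{done} step is matched by $r_0 \xrightarrow{\texttt{done}} r_1$, landing in $(q_0, r_1)$.
\item Every pair with right component $r_1$ satisfies \textsc{L-to-R} through \textsc{R-Final}, since $r_1 \in \mathcal{F}_2$. Its $\mathcal{E}$ self-loop matches every left step and lands back in $R$. Concretely, $q_0 \to q_1$, $q_1 \to q_2$ and $q_2 \to q_2$ or $q_0$ all produce pairs paired with $r_1$.
\end{itemize}

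For non-membership, suppose $(q_0, r_0) \in \fsimdelayL$. Since $q_0 \notin \mathcal{F}_1$, either disjunct of $\fsimdelayFL$ yields $(q_1, r_0) \in \fsimdelayL$. The \textsc{L-to-R} case is handled by \Cref{lem:delaysim_step}; alternatively, the \textsc{R-Final} disjunct is unavailable because $r_0 \notin \mathcal{F}_2$, and the \textsc{R-Delay} case still leads to $q_1$ paired with $r_0$, its only \texttt{schedule}-successor.

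Next, $q_1 \in \mathcal{F}_1$, so \textsc{L-Step} is blocked and we must have $(q_1, r_0) \in \fsimdelayR(\fsimdelayL)$. I then show by induction on the least fixed point $\fsimdelayR$ that no pair $(q, r_0)$ with $q \in \{q_1, q_2\}$ belongs to it:
\begin{itemize}
\item \textsc{R-Final} never applies, because $r_0 \notin \mathcal{F}_2$.
\item Through \textsc{R-Delay}, $q_1$ only reaches $(q_2, r_0)$, and $q_2$ has the \texttt{work} self-loop, whose only match is $r_0 \xrightarrow{\texttt{work}} r_0$, giving $(q_2, r_0)$ again.
\end{itemize}
So the least fixed point can never be entered. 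Formally, the complement set $\{ (q, r) \mid r = r_0 \Rightarrow q \notin \{q_1, q_2\} \}$ is a prefixed point of the $\fsimdelayR$-functor, hence contains $\fsimdelayR(\fsimdelayL)$, which contradicts $(q_1, r_0) \in \fsimdelayR(\fsimdelayL)$.

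The main obstacle is this non-membership argument. It needs careful handling of the inductive predicate, in particular the choice of the prefixed-point invariant that excludes $(q_1, r_0)$ and $(q_2, r_0)$. Everything else is routine unfolding.
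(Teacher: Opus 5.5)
Your proposal is correct and follows the paper's proof. It proves inclusion via the \textsc{Commit} disjunct, and proves strictness via the automata of \Cref{ex:ddelay} using two \textsc{Wait} unfoldings, a \textsc{Commit}, and the same four-pair relation $\{(q_2,r_0),(q_0,r_1),(q_1,r_1),(q_2,r_1)\}$; you exhibit it up front as a postfixed point, whereas the paper accumulates it incrementally with the rules of \Cref{fig:rules_esim}. Your prefixed-point argument for $(q_0,r_0)\notin\fsimdelayL$ is a rigorous version of what the paper only argues informally in \Cref{ex:ddelay}, and it also covers the \textsc{L-to-R} option at $(q_0,r_0)$, which that discussion glosses over.
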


  \begin{proof}
    The fact that $\fsimdelay \subseteq \ddelaysim$ immediately follows from the \textsc{Commit} disjunct in the definition of \texttt{delay}. It remains to show that there exists at least one pair of B\"uchi automata such that $\ddelaysim$ contains strictly more pairs of states than $\fsimdelay$. 
    For the two automata from \Cref{ex:ddelay}, we already discussed that $(q_0, r_0) \notin \fsimdelay$.
    However, we show that $(q_0, r_0) \in \ddelaysim$.
    We start by unfolding the underlying least fixed point $\delay$ twice (exploiting the first disjunct {\color{gray}(\textsc{Wait})}) to get to $(q_2, r_0) \in \ddelaysim$.
    We then \emph{commit} to prove that a right-accepting state can be reached by unfolding \texttt{wait} and exploiting the second disjunct {\color{gray}(\textsc{Commit})}.
    From there it suffices to show that $(q_2, r_0) \in \fsimdelayL$. This can be done using the rules of \Cref{fig:rules_esim}. \begin{align*}
      &\gtripleDelay{\emptyset}{q_2}{\relL}{r_0}\\
      \textrm{\color{gray} By \textsc{L-Invariant}}
      \impliedby&\gtripleDelay{(q_2, r_0)}{q_2}{\relL}{r_0}\\
      \textrm{\color{gray} By \textsc{L-Step}}
      \impliedby&\tripleDelay{(q_2, r_0)}{q_0}{\relL}{r_1} \wedge \tripleDelay{(q_2, r_0)}{q_2}{\relL}{r_0}
    \end{align*}

    By \textsc{L-Cycle}, the second conjunct is trivially discharged, and it only remains to prove $\gtripleDelay{(q_2, r_0)}{q_0}{\relL}{r_1}$. \begin{align*}
      &\gtripleDelay{(q_2, r_0)}{q_0}{\relL}{r_1}\\
      \textrm{\color{gray} By \textsc{L-Invariant}}
      \impliedby&\gtripleDelay{(q_2, r_0), (q_0, r_1)}{q_0}{\relL}{r_1}\\
      \textrm{\color{gray} By \textsc{L-Step} and $q_0 \notin \mathcal{F}_1$}
      \impliedby&\tripleDelay{(q_2, r_0), (q_0, r_1)}{q_1}{\relL}{r_1}\\
      \textrm{\color{gray} By \textsc{L-Invariant}}
      \impliedby&\gtripleDelay{(q_2, r_0), (q_0, r_1), (q_1, r_1)}{q_1}{\relL}{r_1}\\
      \textrm{\color{gray} By \textsc{L-to-R} and \textsc{R-Delay}}
      \impliedby&\gtripleDelay{(q_2, r_0), (q_0, r_1), (q_1, r_1)}{q_2}{\relR}{r_1}\\
      \textrm{\color{gray} By \textsc{R-Final} and $r_1 \in \mathcal{F}_2$}
      \impliedby&\tripleDelay{(q_2, r_0), (q_0, r_1), (q_1, r_1)}{q_2}{\relL}{r_1} \ \wedge\\
      &\tripleDelay{(q_2, r_0), (q_0, r_1), (q_1, r_1)}{q_0}{\relL}{r_1}
    \end{align*}
    By \textsc{L-Cycle}, the right conjunct is discharged and it remains to show \begin{align*}
      &\gtripleDelay{(q_2, r_0),(q_0, r_1), (q_1, r_1)}{q_2}{\relL}{r_1}\\
      \textrm{\color{gray} By \textsc{L-Invariant}}
      \impliedby&\gtripleDelay{(q_2, r_0),(q_0, r_1), (q_1, r_1), (q_2, r_1)}{q_2}{\relL}{r_1}\\
      \textrm{\color{gray} By \textsc{L-Step} and $q_2 \notin \mathcal{F}_1$}
      \impliedby&\tripleDelay{(q_2, r_0), (q_0, r_1), (q_1, r_1), (q_2, r_1)}{q_2}{\relL}{r_1} \ \wedge\\
      &\tripleDelay{(q_2, r_0), (q_0, r_1), (q_1, r_1), (q_2, r_1)}{q_0}{\relL}{r_1}
    \end{align*}
    And by $\textsc{L-Cycle}$ we are done since both $(q_2, r_1)$ and $(q_0, r_1)$ are part of the coinduction hypothesis.
  \end{proof}

  \subsection{Repeated Delay}

  With $\ddelaysim$, left-accepting states can only be ignored at the beginning of a proof. Once we commit to track left-accepting states, we cannot use \textsc{Wait} anymore, and every subsequent visit to a left-accepting state will force us to prove that a right-accepting state can be reached.
  For the previous example, this was not an issue as it was enough to unroll the first two steps of the left automaton, skipping unnecessary $\mathcal{F}_1$-states until we are positioned right before the last event allowing us to conclude that our liveness goal is fulfilled.
  In general, this style of reasoning can be applied for reachability/termination specifications where, once an objective has been attained once, the left-automata is unconstrained.
  For richer specifications, this strategy does not always work.

  \begin{example}[Double Delay Simulation is still too strong]
    \label{ex:ddelay-too-strong}
    Consider the following example. The left automaton is the same as in \Cref{ex:ddelay}, but instead of requiring for the event \texttt{done} to be produced at least once, we require to be \texttt{done} infinitely many times.
    Concretely, the specification automaton now has a back edge forcing to transition from $r_1$ back to $r_0$.

    \begin{center}
      \begin{tikzpicture}[shorten >=1pt,node distance=2cm,on grid,auto,initial text=]
        \node[state,initial] (q_0) {$q_0$};
        \node[state,accepting] (q_1) [below right=of q_0] {$q_1$};
        \node[state] (q_2) [right=of q_0] {$q_2$};
        \path[->] (q_0) edge [left] node {\texttt{schedule}} (q_1)
          (q_1) edge [right] node {\texttt{init}} (q_2)
          (q_2) edge [above] node {\texttt{done}} (q_0)
          (q_2) edge [loop right] node {\texttt{work}} ();
      \end{tikzpicture}
      \begin{tikzpicture}[shorten >=1pt,node distance=2cm,on grid,auto,initial text=]
          \node[state,initial]  (q_0)                      {$r_0$};
          \node[state,accepting]          (q_1) [right=of q_0] {$r_1$};
    \path[->] (q_0) edge  [bend left]   node        {\texttt{done}} (q_1)
      (q_0) edge [loop above] node        {$\mathcal{E}\setminus\{\texttt{done}\}$} ()
      (q_1) edge [bend left] node {$\mathcal{E}$} (q_0);
      \end{tikzpicture}
    \end{center}

    With this modification of the specification, the right automaton does not double delay simulate the left one (i.e., $(q_0, r_0) \notin \ddelaysim$). We can use the double delay trick once to reach $(q_0, r_1)$, but then the specification automaton cycles back to $r_0$ and forces us to prove $(q_1, r_0) \in \fsimdelayL$. At this point, we cannot use \textsc{Wait} anymore, and since $q_1 \in \mathcal{F}_1$, we have to prove that $r_1$ can be reached again using only the rules of $\fsimdelayR$. As discussed in the previous section, this is not possible because of the self loop in $q_2$.
  \end{example}

  Fortunately, there is \textit{a priori} no reason to restrict ourselves to using the \texttt{wait} trick only at the beginning of a proof. Instead, every time a right-accepting state has been visited, it is correct to again temporarily ignore further visits to a left-accepting state.
  To concretize this intuition, we define the following \emph{repeated delay simulation} (we note $\texttt{fsim}_\texttt{rdelay}$). 
  
  \begin{definition}[Repeated Delay Simulation]
    \begin{align*}
      \rdelaysimR(X) &\triangleq \mu Y .\\
      \textsc{\color{gray}(R-Final)} & \quad \{ \ (s_1, s_2) \mid s_2 \in \mathcal{F}_2 \wedge \forall e. \forall \iostep{s_1}{e}{s'_1}. \exists \iostep{s_2}{e}{s'_2}. \ (s'_1, s'_2) \in \rdelaysimW(X) \ \} \ \cup\\
      \textsc{\color{gray}(R-Delay)} & \quad \{ \ (s_1, s_2) \mid \forall e. \forall \iostep{s_1}{e}{s'_1}. \exists \iostep{s_2}{e}{s'_2}. \ (s'_1, s'_2) \in Y \ \}\\
      \\
      \rdelaysimL &\triangleq \nu X .\\
      \textsc{\color{gray}(L-to-R)} & \quad \{ \ (s_1, s_2) \mid (s_1, s_2) \in \rdelaysimR(X) \ \} \ \cup\\
      \textsc{\color{gray}(L-Step)} & \quad \{ \ (s_1, s_2) \mid s_1 \notin \mathcal{F}_1 \wedge \forall e. \forall \iostep{s_1}{e}{s'_1}. \exists \iostep{s_2}{e}{s'_2}. \ (s'_1, s'_2) \in X \ \}\\
      \\
      \rdelaysimW(X) &\triangleq \mu Y.\\
      \textsc{\color{gray}(W-Wait)} & \quad \{ \ (s_1, s_2) \mid \forall e. \forall \iostep{s_1}{e}{s'_1}. \exists \iostep{s_2}{e}{s'_2}. \ (s'_1, s'_2) \in Y \ \} \ \cup\\
      \textsc{\color{gray}(W-Commit)} & \quad \{ \ (s_1, s_2) \mid (s_1, s_2) \in X \ \}\\
      \\
      \rdelaysim &\triangleq \rdelaysimW(\rdelaysimL)
    \end{align*}
  \end{definition}

  The intuition behind this definition is the following: Initially, we work with $\rdelaysimW$, which gives access to the \emph{unrolling rules} \textsc{W-Wait} and \textsc{W-Commit}, allowing us to ignore finitely many left-accepting states (same as $\delay$). Whenever we desire, we can commit to track left-accepting state and instead switch to the coinductive relation $\rdelaysimL$ (similar to $\fsimdelayL$). As soon as a left-accepting state is encountered while using $\rdelaysimL$, the control is transferred to the inductive relation $\rdelaysimR$ (similar to $\fsimdelayR$), which forces us to reach a right-accepting state in finitely many steps.
  Once a right-accepting state is found, instead of cycling back to $\rdelaysimL$ (as in $\fsimdelay$), we instead cycle back to $\rdelaysimW$.

  \begin{theorem}[Soundness]
    \label{thm:wsim_sound}
      $(s_1, s_2) \in \rdelaysim \implies \lang(s_1) \subseteq \lang(s_2)$
  \end{theorem}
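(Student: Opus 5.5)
We would prove a slightly stronger statement: for every $(s_1, s_2) \in \rdelaysimW(\rdelaysimL)$ and every $\tau \in \lang(s_1)$, we have $\tau \in \lang(s_2)$. The argument is a coinduction on $\lang(s_2)$, following the proof of \Cref{thm:fsimdelay-sound}. The complication is the three-layer nesting $\mathtt{W} \to \mathtt{L} \to \mathtt{R} \to \mathtt{W}$.

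\textbf{Auxiliary lemmas.} These mirror \Cref{lem:delaysim_reach} and \Cref{lem:delaysim_step}.
\begin{enumerate}
\item $\rdelaysimR(\rdelaysimL) \subseteq \rdelaysimL$. This follows by one unfolding and the \textsc{L-to-R} disjunct.
\item Any pair in $\rdelaysimW(\rdelaysimL)$, $\rdelaysimL$ or $\rdelaysimR(\rdelaysimL)$ can match every left step. The successor pair then again lies in $\rdelaysimW(\rdelaysimL)$. Each layer is handled by induction on its inductive definition:
\begin{itemize}
\item \textsc{W-Wait} and \textsc{R-Delay} give a successor in the same inductive layer; we use $\rdelaysimL \subseteq \rdelaysimW(\rdelaysimL)$ via \textsc{W-Commit} where needed.
\item \textsc{L-Step} gives a successor in $\rdelaysimL$.
\item \textsc{R-Final} gives a successor in $\rdelaysimW(\rdelaysimL)$.
\end{itemize}
In particular every such relation is contained in $\texttt{sim}$, so traces are always matched.
\end{enumerate}

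\textbf{Main argument.} Fix a left run witnessing $\tau \in \lang(s_1)$; it visits $\mathcal{F}_1$ infinitely often. We must produce a right run that visits an $\mathcal{F}_2$-state and then returns to a pair in $\rdelaysimW(\rdelaysimL)$ on the remaining suffix, so that the coinduction hypothesis applies. We argue in three nested phases.
\begin{enumerate}
\item \textbf{W phase.} Induct on the derivation of $\rdelaysimW$. In the \textsc{W-Wait} case we step both automata, using the inductive part of $\lang(s_1)$ to peel a step and keep the remaining trace in $\lang$. Since the derivation is well-founded, this terminates in \textsc{W-Commit}, reaching a pair in $\rdelaysimL$.
\item \textbf{L phase.} Induct on the inductive part of $\lang(s_1)$, i.e. on the distance to the next $\mathcal{F}_1$-state. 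If $s_1 \notin \mathcal{F}_1$, we may use \textsc{L-Step} or \textsc{L-to-R}; \textsc{L-to-R} moves us directly into the R phase. If $s_1 \in \mathcal{F}_1$, only \textsc{L-to-R} applies, giving $\rdelaysimR(\rdelaysimL)$. The measure is sound because reaching an $\mathcal{F}_1$-state forces the switch.
\item \textbf{R phase.} Induct on $\rdelaysimR$. \textsc{R-Delay} steps forward. \textsc{R-Final} yields $s_2 \in \mathcal{F}_2$ and a successor pair in $\rdelaysimW(\rdelaysimL)$. At that point we discharge the accepting disjunct of $\lang(s_2)$ and apply the coinduction hypothesis.
\end{enumerate}
Throughout, the right-hand steps taken before reaching $\mathcal{F}_2$ are wrapped in the inductive (non-accepting) disjunct of $\lang_{A_2}$, since the $\mu X$ sits inside the $\nu L$.

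\textbf{Main obstacle.} The difficulty is organising the nested inductions so that the goal stays within the $\mu$ part of $\lang(s_2)$ until \textsc{R-Final} is reached. Each induction must be generalised over the current pair and the remaining trace, with the left-language hypothesis carried along. One catch is the left-language hypothesis after passing an $\mathcal{F}_1$-state: at that point we only know the suffix is in $\lang$ through the coinductive disjunct, so the induction measure must be re-established. We would handle this by unfolding $\lang(s_1')$ again after each step, since membership in $\lang$ is preserved along a run. If that proves awkward in Rocq, the fallback is to first prove the strengthened statement by coinduction with the invariant $\{(\tau, s_2) \mid \exists s_1.\ (s_1,s_2) \in \rdelaysimW(\rdelaysimL) \wedge \tau \in \lang(s_1)\}$ and reuse the soundness structure of \Cref{thm:fsimdelay-sound}.
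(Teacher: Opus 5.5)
Your proposal is correct and matches the paper's proof. You use coinduction on $\lang(s_2)$, then nested inductions on $\rdelaysimW$, on the inductive part of $\lang(s_1)$, and on $\rdelaysimR$, in exactly the order the paper prescribes. Your auxiliary step-matching lemmas are harmless but not needed: each phase only uses one-step unfoldings plus the fact that membership in $\lang$ is preserved along a transition.
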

  \begin{proof}(Sketch)
    The proof proceeds by coinduction on $\lang(s_2)$ and by induction on the inductive part of $\rdelaysimW$, the inductive part of $\lang(s_1)$ and the inductive definition of $\rdelaysimR$ (in this precise order).
    The technical details are rather tedious, and we refer the curious readers to our Rocq development for a complete proof.
  \end{proof}

  As for all the simulation relations presented so far, $\rdelaysim$ can be presented as a deductive system. We omit the exact definition of the triples associated with $\rdelaysim$. Instead, we directly present (a selection) of reasoning rules in \Cref{fig:rules_wsim}.

  \begin{figure}[ht!]
    \begin{mathpar}
      \inferrule[L-to-R]{\gtripleRdelay{H}{s_1}{\relR}{s_2}}{\gtripleRdelay{H}{s_1}{\relL}{s_2}}\qquad
      \inferrule[L-Step]{s_1 \notin \mathcal{F}_1\\\forall e. \forall \iostep{s_1}{e}{s'_1}. \exists \iostep{s_2}{e}{s'_2}. \ \tripleRdelay{H}{s'_1}{\relL}{s'_2}}{\gtripleRdelay{H}{s_1}{\relL}{s_2}}\\
      \inferrule[R-Delay]{\forall e. \forall \iostep{s_1}{e}{s'_1}. \exists \iostep{s_2}{e}{s'_2}. \ \gtripleRdelay{H}{s'_1}{\relR}{s'_2}}{\gtripleRdelay{H}{s_1}{\relR}{s_2}}\qquad
      \inferrule[R-Final]{s_2 \in \mathcal{F}_2\\\forall e. \forall \iostep{s_1}{e}{s'_1}. \exists \iostep{s_2}{e}{s'_2}. \ \gtripleRdelay{H}{s'_1}{\relW}{s'_2}}{\gtripleRdelay{H}{s_1}{\relR}{s_2}}\\
      \inferrule[W-Wait]{\forall e. \forall \iostep{s_1}{e}{s'_1}. \exists \iostep{s_2}{e}{s'_2}. \ \gtripleRdelay{H}{s'_1}{\relW}{s'_2}}{\gtripleRdelay{H}{s_1}{\relW}{s_2}} \qquad
      \inferrule[W-Commit]{\tripleRdelay{H}{s_1}{\relL}{s_2}}{\gtripleRdelay{H}{s_1}{\relW}{s_2}} \\
      \inferrule[L-Cycle]{(s_1, s_2) \in H}{\tripleRdelay{H}{s_1}{\relL}{s_2}} \qquad
      \inferrule[L-Guard]{\gtripleRdelay{H}{s_1}{\relL}{s_2}}{\tripleRdelay{H}{s_1}{\relL}{s_2}} \qquad
      \inferrule[L-Invariant]{(s_1, s_2) \in H'\\\forall (s'_1, s'_2) \in H'. \ \gtripleRdelay{H \cup H'}{s'_1}{\relL}{s'_2}}{\gtripleRdelay{H}{s_1}{\relL}{s_2}}
    \end{mathpar}
    \caption{Selection of proof rules for $\rdelaysim$}
    \label{fig:rules_wsim}
    \Description{Rules}
  \end{figure}

  We showcase the use of $\rdelaysim$ by applying the rules of \Cref{fig:rules_wsim} to the example of the beginning of this section (which was neither covered by $\fsimdelay$ nor by $\ddelaysim$).

  \begin{example}
	  \label{ex:rdelay_work}
	  Recall the following two automata from \Cref{ex:ddelay-too-strong}:
  \begin{center}
    \begin{tikzpicture}[shorten >=1pt,node distance=2cm,on grid,auto,initial text=]
      \node[state,initial] (q_0) {$q_0$};
      \node[state,accepting] (q_1) [below right=of q_0] {$q_1$};
      \node[state] (q_2) [right=of q_0] {$q_2$};
      \path[->] (q_0) edge [left] node {\texttt{schedule}} (q_1)
	      (q_1) edge [right] node {\texttt{init}} (q_2)
	      (q_2) edge [above] node {\texttt{done}} (q_0)
	      (q_2) edge [loop right] node {\texttt{work}} ();
    \end{tikzpicture}
    \begin{tikzpicture}[shorten >=1pt,node distance=2cm,on grid,auto,initial text=]
        \node[state,initial]  (q_0)                      {$r_0$};
        \node[state,accepting]          (q_1) [right=of q_0] {$r_1$};
	\path[->] (q_0) edge  [bend left]   node        {\texttt{done}} (q_1)
		(q_0) edge [loop above] node        {$\mathcal{E}\setminus\{\texttt{done}\}$} ()
		(q_1) edge [bend left] node {$\mathcal{E}$} (q_0);
    \end{tikzpicture}
  \end{center}
  We want to show $\gtripleRdelay{\emptyset}{q_0}{\relW}{r_0}$.
  We start by applying the \textsc{W-Wait} rule twice.
	\begin{align*}
		&\gtripleRdelay{\emptyset}{q_0}{\relW}{r_0} \\
		\textrm{\color{gray} By \textsc{W-Wait}} \impliedby&\gtripleRdelay{\emptyset}{q_1}{\relW}{r_0} \\
		\textrm{\color{gray} By \textsc{W-Wait}} \impliedby&\gtripleRdelay{\emptyset}{q_2}{\relW}{r_0} 
	\end{align*}
	Then, we are in $(q_2, r_0)$. From there, we commit and accumulate.
	\begin{align*}
		\textrm{\color{gray} By \textsc{W-Commit} and \textsc{L-Guard}} \impliedby&\gtripleRdelay{\emptyset}{q_2}{\relL}{r_0} \\
		\textrm{\color{gray} By \textsc{L-Invariant}} \impliedby&\gtripleRdelay{(q_2,r_0)}{q_2}{\relL}{r_0} 
	\end{align*}
	Since $q_2$ is not final, we can apply the \textsc{L-Step} rule. However, because $q_2$ has two outgoing transitions, our goal splits into two goals.
	The first transition stays in $(q_2, r_0)$ and thus we can conclude directly.
	\begin{align*}
		\textrm{\color{gray} By \textsc{L-Step}} \impliedby&\tripleRdelay{(q_2,r_0)}{q_2}{\relL}{r_0} \\
		\textrm{\color{gray} By \textsc{L-Cycle}} \impliedby&(q_2,r_0) \in \{(q_2,r_0)\} 
	\end{align*}
	The second transition goes to $(q_0, r_1)$. As $r_1$ is final, we can apply the \textsc{L-to-R} rule followed by \textsc{R-Final} to conclude.

	\begin{align*}
		\textrm{\color{gray} By \textsc{L-Step} and \textsc{L-Guard}} \impliedby&\gtripleRdelay{(q_2,r_0)}{q_0}{\relL}{r_1} \\
		\textrm{\color{gray} By \textsc{L-to-R}} \impliedby&\gtripleRdelay{(q_2,r_0)}{q_0}{\relR}{r_1} \\
		\textrm{\color{gray} By \textsc{R-Final}} \impliedby&\gtripleRdelay{(q_2,r_0)}{q_1}{\relW}{r_0} 
	\end{align*}
	Thus, we are now in $(q_1, r_0)$ and back in the wait relation. 
	We apply once the \textsc{W-Wait} rule to get to $(q_2, r_0)$. 
	Then, we commit and can directly conclude as we already have this state in our relation. 
	\begin{align*}
		\textrm{\color{gray} By \textsc{W-Wait}} \impliedby&\gtripleRdelay{(q_2,r_0)}{q_2}{\relW}{r_0} \\
		\textrm{\color{gray} By \textsc{W-Commit}} \impliedby&\tripleRdelay{(q_2,r_0)}{q_2}{\relL}{r_0} \\ 
		\textrm{\color{gray} By \textsc{L-Cycle}} \impliedby&(q_2,r_0) \in \{(q_2,r_0)\} 
	\end{align*}
  \end{example}

  \section{Related Work}

  \subsection{Fairness Preserving Simulations for B\"uchi Automata}

  There is a large body of literature
  studying fairness-preserving notions of simulation for B\"uchi automata \cite{mc-modular-verif, equivalence-fair-kripke,fair-simulation, delay-simulation, more-simulation-games}.
  \citeauthor{mc-modular-verif} introduced a first notion of simulation for transition systems augmented with a fairness condition \cite{mc-modular-verif}. To be precise, their definition covers transition systems equipped with a Street condition (note that such transition systems can be translated to B\"uchi automata, and vice versa).
  \citeauthor{equivalence-fair-kripke}~\cite{equivalence-fair-kripke} later extended the notion of simulation presented by \citeauthor{mc-modular-verif}~\cite{mc-modular-verif} into several alternative notions of bisimulation for transition systems equipped with a Muller fairness condition (again, these can be translated to B\"uchi automata, and vice versa).
  However, as later observed by \citeauthor{fair-simulation}~\cite{fair-simulation}, the definitions of \citeauthor{mc-modular-verif}~\cite{mc-modular-verif} and \citeauthor{equivalence-fair-kripke}~\cite{equivalence-fair-kripke} are not \emph{local}: both refer to fragments of executions instead of only referring to states and transitions. Consequently, these notions of fairness-preserving simulation are computationally expensive to automate, and they are also not well-suited for interactive proofs.
  \citeauthor{fair-simulation}~\cite{fair-simulation} introduced \emph{Fair Simulation} to overcome some of these limitations.
  Contrary to the definitions of \citeauthor{mc-modular-verif}~\cite{mc-modular-verif} and \citeauthor{equivalence-fair-kripke}~\cite{equivalence-fair-kripke}, the notion of fair simulation introduced by \citeauthor{fair-simulation}~\cite{fair-simulation} is more local. In particular, there are efficient algorithms to automatically prove fair simulation. Nonetheless, applications of fair simulation to interactive proofs are not investigated by \citeauthor{fair-simulation}~\cite{fair-simulation}.
  Further, to the best of our knowledge, none of the simulation techniques listed above have been mechanized in an interactive proof assistant.

  \subsection{Interactive Proofs using Simulation Relations}

  Over the course of the last decade, a variety of techniques and tools to integrate simulation techniques into interactive proof assistants have been developed.
  For example, our paper builds on the framework of \emph{parameterized coinduction} introduced by \citeauthor{paco}~\cite{paco} as a systematic methodology to develop interactive proofs by coinduction.
  Parameterized coinduction has been successfully applied to a wide range of problems
  including behavioral inclusion of reactive and impure programs modeled as interaction trees \cite{itrees}. It has also been used to develop powerful deductive systems to reason about weak notions of stream equivalence \cite{gpaco}.

  Another notable application of simulation techniques to interactive verification in a proof assistant is the Simuliris framework \cite{simuliris}. Simuliris combines a simulation relation with the mechanized concurrent separation logic Iris \cite{iris} in order to verify the correctness of optimizations for concurrent programs.
  At its core, Simuliris features a \emph{fair termination preserving} simulation relation: it ensures that terminating behaviors of the left program can be reproduced by the right program, and it further allows the left program to have diverging behaviors, so long as the right program can mimic these diverging behaviors via a fair execution. However, Simuliris cannot exploit fairness assumptions on the left program, nor does it support reasoning about more general liveness properties (other than termination/divergence).

  The recently developed notion of \emph{freely stuttering simulation} (FreeSim) \cite{for-free} is also related to our work.
  The core idea of FreeSim is to simplify interactive proofs of behavioral inclusion by simulation in the presence of \emph{stutter steps} (i.e., computation steps that do not produce events).
  In this setting, the right-hand system is typically allowed to stutter for finitely many computation steps until, eventually, synchronous progress is made and both the left-hand and the right-hand systems are emitting the same event (at which point the coinduction hypothesis can be used to close cycles).
  FreeSim proposes a systematic approach to further weaken this principle by instead allowing for \emph{asynchronous} progress to be made, effectively enabling (under additional constraints) to exploit the coinduction hypothesis after a stutter step in the right-hand system.
  Techniques developed by \citeauthor{for-free}~\cite{for-free} are similar in spirit to the techniques we employ in our notions of fairness-preserving simulation with repeated delay.
  We note, however, that FreeSim does not handle behavioral inclusions in the presence of fairness assumptions/requirements.

  A more recent approach that is connected to ours is the notion of \emph{Fair Operational Semantics} (FOS) introduced by \citeauthor{fos}~\cite{fos}.
  In essence, FOS proposes to model fairness assumptions of (concurrent) programs as an operational semantics. This is achieved by labelling the operational semantics of a programming language with specific events. As part of this framework, the authors show that it is possible to develop notions of simulation to prove behavioral inclusions in the presence of fairness assumptions modeled by FOS \cite{fos}.
  We note however that, as stated by the authors themselves, the notion of fairness supported by FOS differs in several subtle aspects with other notions of fairness that are commonly used in the automata-based model checking literature \cite{fos}. In particular, it is not yet clear whether arbitrary liveness specifications (and in particular, arbitrary LTL specifications) can easily be encoded using
  FOS. In contrast, the notions of simulation we developed in this paper apply to arbitrary B\"uchi automata, and by extension to any omega-regular specification (including in particular, LTL-definable specifications).

  \section{Conclusion and Future Work}

  In this paper, we re-investigated, from the point of view of interactive proofs, several notions of fairness-preserving simulations that are typically employed in the context of automata-based model-checking.
  Starting from \emph{direct simulation}, arguably the simplest notion of fairness-preserving simulation, we progressively constructed several improvements allowing to handle increasingly complex examples.
  Along the way, we demonstrated that by combining carefully designed fairness-preserving simulation relations with parameterized coinduction, we can obtain simple deductive systems to prove fair trace inclusion between transition systems modelling programs and B\"uchi automata modelling temporal specifications.
  We believe that this idea deserves further investigation. In particular, it could be used as a basis to develop a fully-featured mechanized program logic for temporal reasoning.

  \section*{Data-Availability Statement}

  All notions of simulation presented throughout the paper, their proofs of soundness, as well as all examples and counterexamples have been formalized in the Rocq proof assistant.
  The sources and their documentation are available at the following address \cite{artifact}:
  \begin{center}
    \url{https://zenodo.org/records/15658188}
  \end{center}

  \section*{Acknowledgements}

  This work was supported by the European Research Council (ERC) Grant HYPER (No. 101055412). Views and opinions expressed are however those of the authors only and do not necessarily reflect those of the European Union or the European Research Council Executive Agency. Neither the European Union nor the granting authority can be held responsible for them. A.~Correnson and I.~Kuhn carried out this work as members of the Saarbr\"ucken Graduate School of Computer Science.

  \bibliographystyle{ACM-Reference-Format}
  \bibliography{references}

\end{document}